\documentclass[a4paper,11pt]{article}

\usepackage{geometry}
\usepackage[bitstream-charter]{mathdesign}
\usepackage[utf8]{inputenc}
\usepackage[T1]{fontenc}
\usepackage[english]{babel}
\geometry{a4paper,left=25mm,top=25mm,bottom=25mm,right=25mm}
\setlength\parindent{0pt}

\usepackage{amsthm}
\newtheorem{theorem}{Theorem}[section]

\usepackage{algorithm}
\usepackage{algpseudocode}

\usepackage{graphicx}
\usepackage{float}
\usepackage{tikz}
\usetikzlibrary{graphs}
\usetikzlibrary{calc}

\usepackage{amsmath}

\usepackage{hyperref}

\usepackage[export]{adjustbox}

\newcommand\variablename[1]{\mathop{\mathit{#1}}\nolimits}
\newcommand{\opt}{\variablename{opt}}
\newcommand{\onl}{\variablename{onl}}

\begin{document}

\pagenumbering{gobble}
\begin{titlepage}
  \begin{center}
    {\huge{Multi-Agent Online Graph Exploration on Cycles and Tadpole Graphs}\footnote{A summary of this work was published at SIROCCO 2024 as a Brief Announcement~\cite{BriefAnnouncement}}}
             
    \vspace{1cm}

		\centering
		Erik van den Akker, Kevin Buchin, Klaus-Tycho Foerster 
				
		\vspace{0.3cm}
		TU Dortmund, Germany

    \vspace{1cm}
    \begin{abstract}
      We study the problem of multi-agent online graph exploration, in which a team of $k$ agents has to explore
      a given graph, starting and ending on the same node. The graph is initially unknown. Whenever a node is visited by an agent, its neighborhood and adjacent edges are revealed. 
      The agents share a global view of the explored parts of the graph. The cost of the exploration
      has to be minimized, where cost either describes the time needed for the entire exploration (time model), or the length of the longest path traversed by any
      agent (energy model).
      We investigate graph exploration on cycles and tadpole graphs for 2-4 agents, providing optimal
      results on the competitive ratio in the energy model ($1$-competitive with two agents on cycles and three agents on tadpole graphs),
      and for tadpole graphs in the time model ($1.5$-competitive with four agents). 
      We also show competitive upper bounds of $2$ for the exploration of tadpole graphs with three agents, 
      and $2.5$ for the exploration of tadpole graphs with two agents in the time model. 
    \end{abstract}
  \end{center}
\end{titlepage}
  
\pagenumbering{arabic}
\setcounter{page}{1}
\section{Introduction} \label{introduction}

In the \textit{Online Graph Exploration} problem, an agent is initially
placed onto a random node of an unknown but labeled weighted graph. The agents task is to visit
all nodes of the graph and return to the starting node.
Since the agent gets the information about the graph online, it only knows about the already visited nodes,
as well as the neighborhood of these nodes at any point in time. Whenever a new node is visited,
the adjacent edges and the neighborhood of the visited node are revealed to the agent.
Traversing an edge incurs the cost of its weight. The goal of the agent is to explore the graph
as efficiently as possible, trying to keep down the cost of the exploration.
This problem was initially described as \textit{the online traveling salesperson problem} 
by Kalyanasundaram and Pruhs \cite{Kalyanasundaram1993}.\\
We examine a modified version of this problem using multiple agents starting at the same node, which
have to collaboratively explore a graph so that each node is visited at least once by some agent.
This variation does not have a consistent name in the literature and will here be referred to as 
\textit{Multi-Agent Online Graph Exploration}.

\subsection{Motivation}

The Online Graph Exploration problem has been widely used to describe the exploration of unknown terrain by multiple autonomous robots like mars rovers \cite{Dynia}, or robot vacuums \cite{Ortolf2012},
where the explored terrain is modeled as an unknown graph. It is also used to describe fleets of robots fully exploring unknown caves or damaged structures, to find and rescue human beings \cite{Higashikawa2012}. 
The literature contains multiple variations of this problem in terms of cost and communication models. Some allow agents to freely
communicate, while in some variations the agents have to meet to exchange information, or exchange information by writing and reading to/from the nodes they are currently located at. Some variations focus on the general exploration time (time model), while others 
focus on the maximum distance traversed by an individual agent. This model is used to describe the maximum energy consumption of any single agent during an exploration (energy model).\\
In the single-agent case, many graph classes like directed graphs \cite{Foerster2016}, tadpole graphs \cite{Brandt2020} or cactus graphs \cite{Fritsch2021} have already been investigated, 
while the main focus of the literature in the case of multi-agent exploration lies on general graphs and trees, with the exception of cycles in the time model \cite{Higashikawa2012} and $n\times n$-grid graphs \cite{Ortolf2012}.\\
The goal of this paper is to analyze exploration strategies for other restricted graph classes in terms of their competitive ratios, which describe the relationship between the exploration time
achieved by an online algorithm and the optimal result of an (offline) algorithm with full knowledge of a given graph. The analysis will be done for both, the time and the energy model. 
The particular graph classes considered are cycles (in the energy model) and tadpole graphs, which consist of a cycle and a path attached to one of its nodes.

\subsection{Contribution}

We show that the exploration strategy of avoiding the longest edge has a competitive ratio of $1.5$ on cycles in the energy model with two agents and a competitive ratio of at least $2$
on tadpole graphs in the time model, when using three agents. We then provide a modified exploration strategy, which does not decide based on the next edge weights, but on the length of the paths traversed
by the agents up to that point. On cycles this modified strategy still has a competitive ratio of $1.5$ in the time model, but also achieves a competitive ratio of $1$ in the energy model using two agents.
On tadpole graphs our exploration strategy achieves a competitive ratio of $1.5$ with a team of four agents in the time model, 
and a competitive ratio of $1$ with a team of three agents in the energy model. With a team of two agents, we achieve a competitive ratio
of $2.5$ on tadpole graphs in both models. An overview of the results is given in Table \ref{resultstable}.

\begin{table}[H]
    \small
    \begin{adjustbox}{center}
    \begin{tabular}{|l|ll|ll|}
    \hline
                                  & \multicolumn{2}{l|}{Time Model}                                     & \multicolumn{2}{l|}{Energy Model}              \\ \hline
    Graph Class (\# of Agents)                            & \multicolumn{1}{c|}{Lower Bound} & \multicolumn{1}{c|}{Upper Bound} & \multicolumn{1}{l|}{Lower Bound} & Upper Bound \\ \hline
    Cycles ($2$ Agents)             & \multicolumn{1}{l|}{$1.5$ \cite{Higashikawa2012}}         & $1.5$ \cite{Higashikawa2012}                              & \multicolumn{1}{l|}{$1$}           & $1$ (Thm. \ref{ampCycles})           \\ \hline
    Tadpole Graphs ($2$ Agents)     & \multicolumn{1}{l|}{$1.5$ (Thm. \ref{TadpoleTimeLower})}        & $2.5$ (Thm. \ref{TadpoleUpperTwo})                             & \multicolumn{1}{l|}{$1.5$ (Thm. \ref{TadpoleEnergyLower})}         & $2.5$ (Thm. \ref{TadpoleUpperTwo})      \\ \hline
    Tadpole Graphs ($3$ Agents)     & \multicolumn{1}{l|}{$1.5$ (Thm. \ref{TadpoleTimeLower})}          & $2$ (Thm. \ref{TadpoleUpperThree})                             & \multicolumn{1}{l|}{$1$}           & $1$ (Thm. \ref{TadpoleUpperThree})          \\ \hline
    Tadpole Graphs ($4+$ Agents)     & \multicolumn{1}{l|}{$1.5$ (Thm. \ref{TadpoleTimeLower})}       & $1.5$ (Thm. \ref{TadpoleUpperFour})                             & \multicolumn{1}{l|}{$1$}           & $1$ (Thm. \ref{TadpoleUpperThree})          \\ \hline
    \end{tabular}
    \end{adjustbox}
    \caption{The new lower and upper bounds for the exploration of cycles and tadpole graphs.}
    \label{resultstable}
\end{table}

\subsection{Related Work}

\paragraph*{Single-Agent Graph Exploration} Online Graph Exploration with a single agent has been studied extensively for many different graph classes: For planar
graphs Kalyanasundaram and Pruhs developed the $16$-competitive \textit{ShortCut} algorithm \cite{Kalyanasundaram1993},
which was generalized by Megow et al.\ to the \textit{Blocking} algorithm, which has a competitive ratio of $16(1 + 2g)$ for
any graph with genus $g$ \cite{Megow2012}.\\
Miyazaki et al.\ gave optimal algorithms for cycles and unit-weight graphs, with competitive ratios and lower bounds of
$1+\sqrt{3}$ and $2$ respectively. Megow et al.\ have extended the result on unit-weight graphs, showing that online exploration of graphs with $c$ 
distinct weights is $2c$ competitive \cite{Megow2012}. Brandt et al.\ have shown that greedy exploration is $2$-competitive
for the case of tadpole graphs \cite{Brandt2020}, Fritsch examined the \textit{Blocking} algorithm on unicyclic and cactus graphs
and proved that it is $3$-competitive for unicyclic and $\frac{5}{2}+\sqrt{2}$ competitive on cactus graphs \cite{Fritsch2021}. 
Foerster and Wattenhofer analyzed different variations of directed graphs and provided matching upper and lower bounds of $n-1$ for 
deterministic algorithms on general directed graphs, where the lower bound could be improved to $\frac{n}{4}$ when using randomized
algorithms \cite{Foerster2016}. Very recently Baligács et al.\ have shown that the \textit{Blocking} algorithm achieves constant competitive
ratio on minor-free graphs \cite{Baligacs2023}.\\
On general graphs the best known strategy is the \textit{Nearest Neighbor} algorithm, with a competitive ratio of $\Theta(\log(n))$ shown by Rosenkrantz et al.\ \cite{Rosenkrantz1977}.
The best known lower bound of $\frac{10}{3}$ was shown by Birx et al.\ and holds for general graphs and planar graphs as well. \cite{Birx2020}. An overview of the graph classes that have
already been investigated for single-agent graph exploration is provided in Table \ref{single-agent-table}.

\begin{table}[H]
    \small
    \centering
    \begin{tabular}{|l|l|l|}
    \hline
                       & \multicolumn{1}{c|}{Lower Bound}                                                  & \multicolumn{1}{c|}{Upper Bound} \\ \hline
    General Graphs     & $10/3$ \cite{Birx2020}                                                                               & $O(\log(n))$ \cite{Rosenkrantz1977}                           \\ \hline
    Planar Graphs     & $10/3$ \cite{Birx2020}                                                                              & $16$ \cite{Kalyanasundaram1993}                              \\ \hline
    Unit-weight Graphs  & $2$\cite{MIYAZAKI2009}                                                                                 & $2$\cite{MIYAZAKI2009}                                \\ \hline
    $c$ distinct weights  & $2$\cite{MIYAZAKI2009}                                                                                 & $2c$\cite{Megow2012}                                \\ \hline
    Cycles             &  $\frac{1 + \sqrt{3}}{2}$\cite{MIYAZAKI2009}                                                                           & $\frac{1 + \sqrt{3}}{2}$\cite{MIYAZAKI2009}                           \\ \hline
    Tadpole Graphs    & $2$ \cite{Brandt2020}                                                                                  & $2$ \cite{Brandt2020}                                 \\ \hline
    Unicyclic Graphs   & $2$ \cite{Brandt2020}                                                                                  & $3$ \cite{Fritsch2021}                                \\ \hline
    Cactus Graphs  & $2$ \cite{Brandt2020}                                                                                  & $2.5+\sqrt{2}$ \cite{Fritsch2021}                           \\ \hline
    Directed Graphs    & \begin{tabular}[c]{@{}l@{}}$n-1$ (deterministic)\\$n/4$ (randomized)\end{tabular} \cite{Foerster2016} & $n-1$\cite{Foerster2016}                              \\ \hline
    \end{tabular}
    \caption{Current results for single-agent graph exploration.}
    \label{single-agent-table}
    \end{table}

\paragraph*{Multi-Agent Graph Exploration on general graphs and trees} In the case of multi-agent graph exploration the main focus of the literature lies on trees. For $k$ agents, Fraigniaud et al.\ gave an $O(\frac{k}{\log k})$-competitive algorithm
for the exploration of trees \cite{Fraigniaud2006}. For the case of unrestricted communication between agents this was improved very recently by Cosson and Massouli{\'e} \cite{DBLP:conf/innovations/CossonM24}, who developed an $O(\sqrt{k})$ competitive tree exploration algorithm.
In the case when $k = 2^{\omega(\sqrt{\log D \log \log D})}$ and $n = 2^{O(2^{\sqrt{\log D}})}$ (where $D$ is the diameter of the tree and $n$ the number of nodes) Ortolf and Schindelhauer presented a subpolynomial $k^{o(1)}$-competitive
algorithm \cite{Ortolf2014}.
Fraigniaud et al.\ also provided a lower bound of $\Omega(2-\frac{1}{k})$ even for the case of global communication, which was later improved to $\Omega(\frac{\log k}{\log \log k})$ by Dynia et al.\ 
\cite{Dynia}. Dynia et al.\ \cite{Dynia2006} looked at a different cost model, in which not the general exploration time, but the distance traversed by any
individual agent has to be minimized. They developed a $4-\frac{2}{k}$-competitive algorithm for the exploration of trees 
 and gave a lower bound of $1.5$. Dynia et al.\ \cite{Dynia2006a} have analyzed the relationship between height and density of a tree and the competitive 
ratio of exploration strategies. They provided an algorithm with a competitive ratio of $O(D^{(1-\frac{1}{p})})$ where $p$ describes the density of a given tree. Dereniowski et al.\ gave an exploration strategy for general graphs using $Dn^{1+\Omega(1)}$ agents, that explores a graph in $O(D)$ time steps,
which implies a constant competitive ratio. \cite{Dereniowski2015}. This was extended on trees by Disser et al.\ to show a lower bound of $\omega(1)$ where $n \log^c n \leq k \leq Dn^{1+o(1)}$ \cite{Disser2020}.

\paragraph*{Multi-Agent Graph Exploration on restricted graph classes} Higashikawa et al.\ gave an optimal $1.5$-competitive algorithm for cycles using global communication and a lower bound of $\Omega(\frac{k}{\log k})$ for trees when using greedy algorithms, 
which implies that the $\frac{k}{\log k}$-competitive algorithm by Fraigniaud et al.\ cannot be improved by using greedy strategies \cite{Higashikawa2012}. 
Ortolf and Schindelhauer gave an upper bound of $O(\log^2 n)$ for $k$-agent exploration of $n \times n$ grid graphs with rectangular holes when using local communication
and a lower bound of $\Omega(\frac{\log k}{\log \log k})$ for deterministic as well as $\Omega(\sqrt{\frac{\log k}{\log \log k}})$ for randomized strategies \cite{Ortolf2012}.

\paragraph*{Other Graph Exploration variations} Aside from the previous examples, there exist different variations of multi-agent graph exploration, like exploration of unlabeled graphs using devices called pebbles \cite{Bender2002, Disser2018},
exploration using battery constrained agents \cite{Bampas2018}, exploration where edges are opaque \cite{Brass2011}, or exploration where nodes are space restricted and can only be occupied by one agent at any given time \cite{Czyzowicz2017}.
A $2$-competitive algorithm for exploration of cycles was developed by Osula for a model in which the exploration is started without any agents and spawning a new agent incurs an invocation cost $q$ \cite{Osula2017}.

\subsection{Outline}\label{outline}
In \S \ref{preliminaries} we define the exploration and cost models used. 
In addition, we describe the selected graph classes and some general concepts of online computation.\\
In \S \ref{ale_tad}, we investigate the competitiveness of the \textit{ALE algorithm} on tadpole graphs.
In \S \ref{cycles} to \ref{ntadpoles} we provide a modification of the \textit{ALE algorithm} and analyze its competitive ratio for the different graph classes, starting with cycles
in \S \ref{cycles}, followed by strategies for tadpole graphs using two to four agents in \S \ref{tadpoles}. Finally, a discussion of the results is given in \S \ref{discussion}.
\section{Preliminaries} \label{preliminaries}

In this section, we first describe the underlying graph and graph exploration models in \S\ref{graph_classes} and \S\ref{graph_exploration} respectively.
After that we briefly describe the general idea of competitive analysis and online algorithms in \S\ref{competitive}.

\subsection{Graph Classes}\label{graph_classes}

We consider undirected weighted Graphs, $G=(V,E)$, where edge weights
will be denoted as $l(e)$ for an edge $e\in E$. Edge weights are nonzero
and nonnegative. We call the weight of an edge its length. We use
$d(v,w)$ to describe the shortest path between two nodes $v,w \in V$. The length
of a path is the sum of the lengths of all edges on that path. We consider
two graph classes: cycles and tadpole graphs, which are defined as followed:

\paragraph*{Cycle} A cycle $C=(V,E)$ is a connected graph, where each node in $V$ has degree 2.
We call $L = \sum_{e \in E}^{}l(e)$ the length of the cycle.

\paragraph*{Tadpole Graph} A tadpole graph $T=(V,E)$ is a graph, which consists of a cycle to which
a tail is attached. In a tadpole graph we denote the length of the cycle as $L_c$ and the length
of the tail with $L_t$. We call the cycle node with degree 3, to which the tail is attached, the intersection $v_i$.
The single node with degree 1 on the tail will be called the end of the tail $v_t$.

\subsection{Online Graph Exploration}\label{graph_exploration}
In the online graph exploration problem, all nodes of a given weighted graph $G=(V,E)$ have to be visited by an agent $a$, starting and ending on a specific node $s \in V$.
The graph is initially unknown, except for the starting node $s$ and its neighborhood $N(s)$. Whenever the agent traverses an edge to a yet unvisited node $v$, all
adjacent edges, as well as $N(v)$ is revealed. Nodes have labels and can be distinguished, but the labels do not provide any further information about the graph to the agent.
When the agent is located on some node $v$ and the exploration is not finished, it can either choose an edge $e$ adjacent to $v$, which it will traverse in $l(e)$ time,
or it can choose to wait on the current node while time progresses.\\
We consider graph exploration with $k \in \mathbb{N}$ agents $a_1,...,a_k$, for which we assume unlimited computational power and shared knowledge, meaning that as soon as any agent learns
something about the neighborhood of a node, all other agents instantly receive the same information.\\
A graph is considered explored, when each node in $V$ has been visited by any agent, and
all $k$ agents have returned to the starting node $s$. All agents move at identical speed, taking a time of $l(e)$ for traversing some edge $e$. When the exploration is finished, the time passed is denoted as $\mathcal{T}$, while the distance traversed by the
agent $a_i$ is denoted as $d(a_i)$. The goal of a graph exploration strategy, is to minimize the cost of the exploration. We consider two models for the cost of an exploration:

\paragraph*{Time Model} 
In the time model the cost of the exploration is the entire exploration time $\mathcal{T}$ taken until the last agent returns to $s$ after all nodes of the graph have been visited.

\paragraph*{Energy Model}
In the energy model the cost of the exploration is the maximum distance \linebreak$\max(d(a_1), ... ,d(a_k))$ traversed by any agent during the exploration of the graph.

\begin{figure}[H]
    \begin{center}
        \begin{tikzpicture}[node distance={15mm}, thick, main/.style = {draw, circle}] 
            \node[main, minimum size=0.5cm] (2) {}; 
            \node[main, minimum size=0.5cm] (3) [above right of = 2, above=-1.5em] {$_{v_i}$ }; 
            \node[main, minimum size=0.5cm] (4) [above left of = 3, above=-1.5em] {};
            \node[main, fill=gray!50, minimum size=0.5cm] (6) [right of = 3, right=-0.5em] {$_{s}$ };
            \node[main, minimum size=0.5cm] (7) [left of = 2, left=-0.5em] {$_{v_s}$ };
            \node[main, minimum size=0.5cm] (8) [left of = 4, left=-0.5em] {$_{v_\ell}$};
            \node[main, minimum size=0.5cm] (9) [right of = 6, right=-0.5em] {$_{v_t}$ };
            \draw (7) [bend left] to node[left] {$e_{mid}$} (8);
            \draw (2) [bend right] to node[right] {} (3);
            \draw (3) [bend right] to node[above right] {} (4);
            \draw (3) [] to node[above] {} (6);
            \draw (2) [] to node[below] {} (7);
            \draw (4) [] to node[above] {} (8);
            \path [red, thick, dashed, ->] (3) edge[out=90,in=45] node[above] {{$d_\ell$}} (8);
            \path [blue, thick, dashed, ->] (3) edge[out=270,in=315] node[below right] {{$d_s$}} (7);
            \path [olive, thick, dashed, ->] (6) edge[out=15,in=165] node[above] {{$d_t$}} (9);
            \path [violet, thick, dashed, ->] (6) edge[out=165,in=15] node[above] {{$d_i$}} (3);
            \draw (6) [] to node[above] {} (9);
        \end{tikzpicture} 
    \end{center}
    \caption[Tadpole example]{Tadpole Graph Example. The paths $p_\ell, p_s, p_i, p_t$ and the edges $e_{mid}$ and $e_{max}$ are~marked.}
    \label{Example}
\end{figure}
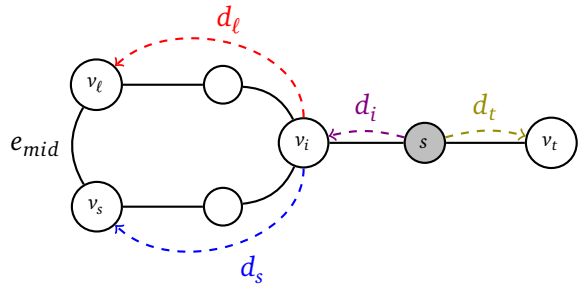

For any tadpole graph, we denote the longest edge of the cycle as $e_{max}$. If multiple edges share the highest length, one of these edges can be picked at random
to be $e_{max}$. We call the point on the cycle with exactly distance $L_c/2$ in both directions to either
the starting point $s$ (when $s$ in on the cycle), or the intersection $v_i$ (when $s$ is on the tail), the midpoint $m$ of the cycle. If $m$ falls onto an edge, this edge is
called $e_{mid}$. If $m$ falls onto a node, the node is called $v_{mid}$. In this case we consider $l(e_{mid}) = 0$.
We consider $e_{mid}=(v_l, v_s)$. If $e_{mid}$ does
not exist, we consider $v_l = v_s = v_{mid}$. 
We denote $p_l$ and $p_s$ with length $d_l$ and $d_s$ as edge-disjoint paths from $s$ (if $s$ is located on the cycle) or $d_i$ (if $s$ is located on the tail) to $v_l$ and $v_s$ respectively, 
where $p_l$ and $p_s$ combined contain all edges of the cycle except $(v_l, v_s)$. We always assume w.l.o.g that $d_s \leq d_l$.\\
The path $p_i$ with length~$d_i$ always denotes the shortest path from $s$ to $v_i$. 
The path $p_t$ with length~$d_t$ denotes the path from $v_i$ to $v_t$ when $s$ is located on the cycle. When $s$ is located on the tail, $p_t$ denotes the path from $s$
to $v_t$. An example of a tadpole graph and the labeled paths and edges is given in Figure \ref{Example}.

\subsection{Competitive Analysis}\label{competitive}
Algorithms where the input is revealed piece by piece instead of being known from the start are called \textit{online algorithms}. For the assessment of online algorithms,
their result gets compared to the optimal result that can be achieved by an algorithm which knows the whole input from the start. In our case, we compare the online
exploration strategies to the optimal paths that can be found when the entire graph is already known. For this we use the definition also used by Brandt et al.~\cite{Brandt2020}, but extend it
by adding the number of agents $k$ to the problem instance.

\paragraph*{Competitive Ratio} Let $opt(G,s,k)$ be the cost of an optimal (offline) solution
for a graph exploration instance, while $\onl_{A}(G,s,k)$ is the cost of a solution found by some online algorithm. We call $\frac{\onl_{A}(G,s,k)}{opt(G,s,k)}$ the \textit{competitive ratio}
for this given instance. Given a graph class $\mathcal{G}$ and a number of agents $k \in \mathbb{N}$, we say that an Algorithm $A$ is $c$-competitive for $\mathcal{G}$ and $k$, if $c$ is the supremum for
the competitive ratio of $A$ in all possible graph exploration instances $(G =(V,E) \in \mathcal{G}, s \in V, k)$. Note that by this definition, an algorithm cannot be less than $1$-competitive.\\
If an exploration strategy is
$c$-competitive for the time model, it is at most $c$-competitive for the energy model, since in an exploration that takes some time $\mathcal{T}$, no agent can traverse a distance further than $\mathcal{T}$,
while in the offline case the maximum distance traversed by any agent matches the general exploration time.
\section{ALE on Tadpole Graphs} \label{ale_tad}

In this section, we extend the \textit{ALE (Avoid-Longest-Edge) Algorithm} by Higashikawa et al.\ \cite{Higashikawa2012} to a strategy for tadpole graph exploration.
The \textit{ALE Algorithm} uses two agents to explore a cycle, by sending the agents clockwise and counterclockwise respectively,
moving only the agent that sees the edge with lower weight at a time. If both agents see edges with the same weight, the edge that is traversed is chosen at random. 
When all nodes have been visited, both agents have full knowledge of the graph and return to the start via their shortest paths. The \textit{ALE Algorithm} reaches a 
competitive ratio of $1.5$ on cycles in the time model.
We first show a competitive lower bound of $2$ for the exploration of tadpole graphs with $k\geq3$ agents for any \textit{ALE}-based exploration strategy.
We then prove that an exploration strategy based on the \textit{ALE Algorithm} is $3$-competitive when using $k=3$ agents and $2$-competitive when using $k=4$ agents 
on tadpole graphs in the time model. 

\paragraph*{Lower Bound}

We show that when starting on a node of the cycle, an adversary can
always hide the tail of the tadpole graph behind the initially seen, but most expensive edge of the tadpole graph, leading to a competitive lower bound
of $2$ for any \textit{ALE}-based exploration of tadpole graphs using at least $k\geq3$~agents.

\begin{theorem}[Lower bound of ALE]\label{AleLower}
    Given $k\geq3$ agents, a strategy that tries to avoid the longest edge of a given graph cannot have a competitive ratio of 
    less than two for the class of tadpole graphs.
\end{theorem}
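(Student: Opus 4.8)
The plan is to construct an adversarial family of tadpole graphs on which any ALE-based strategy (one that moves agents around the cycle while trying to avoid traversing the longest edge it currently sees) is forced to pay roughly twice the optimum. The key idea, already flagged in the paragraph before the statement, is that the adversary starts the agents on a cycle node $s$ and makes the edge leading toward the intersection $v_i$ — and hence toward the hidden tail — look like the most expensive edge of the cycle. Since an ALE-based strategy by definition defers traversing that edge as long as any cheaper cycle edge is unexplored, it will explore the entire remaining cycle first, and only afterwards discover that a long tail dangles off $v_i$.

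Concretely, I would take a cycle of length $L_c$ with the edge $e=(s,v_i)$ of weight just under $L_c/2$ (so it is the unique longest cycle edge), and all other cycle edges small; attach a tail of length $L_t$ at $v_i$, with $L_t$ chosen comparably to $L_c/2$. The adversary reveals nothing about the tail until an agent actually stands on $v_i$. Because the strategy avoids $e$, the agents traverse the long way around the cycle (cost $\approx L_c/2$ from the far side) to reach $v_i$; only then does the tail appear, and some agent must walk down it and back (cost $2L_t$), and finally everyone must return to $s$. Summing the unavoidable contributions gives $\onl \approx L_c/2 + 2L_t + (\text{return costs})$, whereas the offline optimum simply sends one agent across $e$ directly to $v_i$, down the tail and back, and around: $\opt \approx L_c + 2L_t$ split cleverly among $k\ge 3$ agents, or more to the point $\opt$ is essentially $\max(L_c, 2L_t + d_i)$-type quantity that is a constant factor below what the online strategy is forced to spend traversing both the long cycle detour and the tail round trip. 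I would tune $L_c$ and $L_t$ (e.g. $L_t \approx L_c/2$) so that the ratio $\onl/\opt \to 2$.

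The steps, in order: (1) fix the graph family with parameters $L_c, L_t$ and specify exactly what the adversary reveals and when, making sure the "initially seen but most expensive edge" $e=(s,v_i)$ genuinely is the longest edge of the whole tadpole at the moment the strategy must commit; (2) lower-bound $\onl$ by arguing that \emph{any} ALE-based strategy must, before traversing $e$, fully explore the cheaper side of the cycle — pinning down the distance the collective of agents covers, and then adding the forced tail round-trip $2L_t$ plus the forced return to $s$; (3) upper-bound $\opt$ by exhibiting an explicit offline schedule for $k$ agents that exploits $e$ and distributes the work; (4) take the limit of the ratio as the parameters scale, obtaining $2$, and note it holds for every $k\ge 3$ since extra agents cannot reduce the online cost below the tail-traversal bottleneck.

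The main obstacle I expect is step (2): making the lower bound on $\onl$ robust against \emph{all} ALE-based strategies rather than one canonical implementation. "Tries to avoid the longest edge" must be formalized carefully — e.g. a strategy that never traverses an edge $e'$ while some strictly cheaper adjacent unexplored edge exists — and I must ensure the adversary's revealed weights leave the strategy no tie-breaking loophole or clever early commitment to $e$ that would shortcut the tail discovery. A secondary subtlety is handling the return-to-$s$ requirement and the distribution of tail work among $k$ agents so that the offline bound is genuinely tight and the constant is exactly $2$ rather than something slightly smaller; this is where I would be most careful with the parameter choices.
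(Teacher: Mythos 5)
There is a genuine gap: your concrete parameter choice defeats the construction. You make the avoided edge $e=(s,v_i)$ heavy (weight just under $L_c/2$) and the tail comparable to $L_c/2$. But then avoiding $e$ costs the online strategy essentially nothing: the detour around the cheap side to $v_i$ has length $L_c-l(e)\approx L_c/2$, i.e.\ the same as crossing $e$, so the online cost is about $L_c/2+2L_t+L_c/2=L_c+2L_t$. Meanwhile the offline optimum is \emph{not} a constant factor below this, contrary to your step (3)/(4): the far end of the tail lies at distance $\approx L_c/2+L_t$ from $s$, so any offline schedule already needs time at least $2(L_c/2+L_t)=L_c+2L_t$ just for the tail round trip. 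Hence your family gives a ratio tending to $1$, not $2$, no matter how you trade off $L_c$ against $L_t$ as long as $l(e)\approx L_c/2$.

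The paper's construction is the opposite tuning, and this is the missing idea: the ``longest'' edge must be \emph{negligibly} heavy in absolute terms --- it is an edge of weight $2\varepsilon$ incident to $s$, with all other edges of weight $\varepsilon$, and the intersection sits immediately behind it, hiding a tail of length $\approx L_c/2$ built from $\varepsilon$-edges. Then crossing the avoided edge is essentially free for the offline algorithm (with $k\geq 3$ agents the cycle and the tail are explored in parallel in time $\approx L_c=2$), whereas the ALE-type strategy, refusing to cross the $2\varepsilon$ edge while cheaper frontier edges exist, walks almost the whole cycle ($\approx 2-2\varepsilon$) before it even learns the tail exists, and then must pay another $\approx 2$ for the tail and the return, giving $\approx 4$ versus $2$ and ratio $\to 2$. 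Your step (2) worry about formalizing tie-breaking is minor by comparison (strict inequality $2\varepsilon>\varepsilon$ removes ties); the decisive point you are missing is that the adversary needs the avoided edge to be cheap to cross but still the unique maximum, so the online penalty is a full cycle-length detour plus late tail discovery while the optimum stays at the tail round trip alone.
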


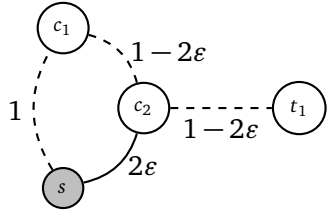
\begin{figure}[H]
    \begin{center}
    \begin{tikzpicture}[node distance={15mm}, thick, main/.style = {draw, circle}, scale=0.5] 
        \node[main, fill=gray!50, minimum size=0.5cm] (2) {$_{s}$ }; 
        \node[main, minimum size=0.5cm] (3) [above right of = 2] {$_{c_2}$ }; 
        \node[main, minimum size=0.5cm] (4) [above left of = 3] {$_{c_1}$ };
        \node[main, minimum size=0.5cm] (6) [right of = 3, right=0.5em] {$_{t_1}$ };
        \draw (2) [bend left, dashed] to node[left] {$1$} (4);
        \draw (2) [bend right] to node[right] {$2\varepsilon$} (3);
        \draw (3) [bend right, dashed] to node[right] {$1 - 2\varepsilon$} (4);
        \draw (3) [dashed] to node[below] {$1 - 2\varepsilon$} (6);
    \end{tikzpicture} 
    \end{center}
    \caption[ALE is not $1.5$-competitive]{A graph in which ALE has a competitive ratio of $2$ for the time model. The dashed lines
    are paths, consisting entirely of edges with length $\varepsilon$.}
    \label{ALETad}
\end{figure}

\begin{proof}
    Consider how an optimal offline strategy, when given $k\geq3$ agents, takes time $2$ to explore the graph shown in Figure \ref{ALETad}, by sending two
    agents clockwise and counterclockwise to $c_1$ and back, as well as one agent on the shortest path to $t_1$ and back.\\
    Since the edge $(s,c_2)$ is the longest edge in the graph, any ALE strategy will send some number of agents clockwise around the graph, from
    $s$ to $c_2$. At this point, after $2-2\varepsilon$ time units have passed, the cycle is explored and the shortest way to complete the exploration, is to send
    some agent from $c_2$ to $t_1$ and back to $s$, leading to an exploration time of $4-4\varepsilon$ and a competitive ratio of $\frac{4-4\varepsilon}{2}$.
    The supremum for the competitive ratio on any graph like the one shown in Figure \ref*{ALETad} is $2$, having arbitrary small $\varepsilon>0$.
\end{proof}

\paragraph*{Upper Bound}

Having shown the lower bound of $2$ for $k\geq 3$ agents, we now show an exploration strategy that reaches this competitive ratio with a team of four agents,
while reaching a competitive ratio of $3$ with a team of three agents. The proof for this Theorem \ref{AleUpper} has been moved to the \hyperref[appendix]{Appendix}.

\begin{theorem}[ALE on tadpole graphs]\label{AleUpper}
    On tadpole graphs, using the ALE strategy, a team of three agents reaches a competitive ratio of $3$,
     while a team of four agents reaches the optimal competitive ratio of $2$ for the time model.
\end{theorem}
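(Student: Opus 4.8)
The plan is to fix the extended strategy, record a few lower bounds on $\opt$, and then bound $\onl$ by a short case analysis over whether $s$ lies on the cycle or on the tail. For the strategy, two designated agents run the original two-agent \textit{ALE} procedure on the cycle; if $s$ lies on the tail, they first walk from $s$ to the intersection $v_i$ (exploring that segment of the tail) and only then start \textit{ALE}. By the structure of \textit{ALE} this phase ends with a total traversal of exactly $L_c-l(e_{max})$, the two agents sitting at the endpoints of $e_{max}$, and every cycle node -- in particular $v_i$ -- already visited. The remaining agents are reserved for the tail. With four agents, one spare shadows each \textit{ALE} agent (copying its moves); the instant the \textit{ALE} agent first to reach $v_i$ does so, its shadow is already at $v_i$ and peels off to traverse $v_i\to v_t\to v_i$ and then return to $s$ along a shortest revealed path, while the \textit{ALE} pair carries on undisturbed. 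With three agents there is only one spare: if $s$ is on the cycle it waits at $s$ until $v_i$ is revealed and then performs (shortest revealed $s$--$v_i$ path) $\to v_t\to v_i\to s$; if $s$ is on the tail it is instead sent from $s$ to $v_t$ and back. Finally all agents return to $s$ along shortest paths in the (by then fully revealed) graph.

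For $\opt$ I would use only elementary bounds: $\opt\ge 2d_\ell$ because some agent must round-trip to $v_\ell$ (the farthest cycle node, at distance $d_\ell$ from $s$ by the midpoint definition); $\opt\ge 2(d_i+d_t)$ because some agent must reach $v_t$; and, if needed, the work bound $\opt\ge\tfrac1k(L_c-l(e_{max})+2L_t)$. I would also record $L_c-l(e_{max})\le 2d_\ell$ and $d_i\le d_\ell$, which follow from the midpoint definition together with $l(e_{mid})\le l(e_{max})$; the first of these says that the \textit{ALE} go-phase, and hence the time $T_i\le L_c-l(e_{max})$ at which $v_i$ is discovered, is at most $\opt$.

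For $\onl$ with $s$ on the cycle, the \textit{ALE} pair finishes the cycle in time $\le\tfrac32\,\opt$ by the $1.5$-competitiveness of \textit{ALE} on cycles (Higashikawa et al.), since $\opt$ is at least the optimum for covering the cycle alone. For the tail agent with four agents, its shadow is at $v_i$ at time $T_i$, so it is home by $T_i+2d_t+d_i\le\opt+2(d_i+d_t)\le 2\opt$ (and if the shortest $v_i$--$s$ path is not yet revealed when it returns from the tail, then $T_i+2d_t<L_c-l(e_{max})\le\opt$ and backtracking along its known route of length $\le\opt$ still finishes by $2\opt$). With three agents, the spare only learns $v_i$ at time $T_i$ and the only path to $v_i$ known then is the one the \textit{ALE} agent walked, of length $\rho\le L_c-l(e_{max})\le\opt$; its journey costs $T_i+\rho+2d_t$ and then a return which is $\le d_i\le\tfrac12\opt$ when the cycle has finished by then (the case $T_i+\rho+2d_t\ge L_c-l(e_{max})$) and is $\le\rho\le\opt$ otherwise, so in both cases it is home by $3\opt$. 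The case $s$ on the tail is no harder: the tail is explored on the way to $v_i$, the $v_t$-side is a separate round trip of length $2d_t\le\opt$, and the \textit{ALE} pair return through $v_i$ for a bound of $\tfrac32\opt$. Taking the maximum over all agents and over the placement of $s$ yields the ratios $2$ and $3$, and optimality of the four-agent bound follows from Theorem~\ref{AleLower}.

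The hard part is the three-agent sub-case in which $v_i$, and hence the whole tail, is ``hidden behind'' $e_{max}$: the \textit{ALE} agent that discovers $v_i$ has then walked nearly the whole cycle to get there, no short route from $s$ to $v_i$ is yet known, and the lone spare must pay three essentially independent $\opt$-sized costs -- waiting until discovery, relocating from $s$ to $v_i$, and the tail round trip together with the walk home -- which is exactly the instance that makes the ratio $3$ tight for this strategy. The fourth agent removes the middle cost outright: shadowing \emph{both} \textit{ALE} agents guarantees a spare already sitting at $v_i$ the moment it is discovered, regardless of which side the tail turns out to hang off. A secondary point to handle carefully is bookkeeping which edges are revealed at each instant -- in particular whether the $e_{max}$-endpoint neighbouring $v_i$ is already known when $v_i$ is found -- since that is what prevents a cleverer return route from pushing the three-agent bound below $3$.
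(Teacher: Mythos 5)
Your bounds and the bottleneck you identify (the lone spare paying for the wait, the relocation to $v_i$, and the tail trip with three agents; shadowing both frontier agents removing the relocation cost with four) are in the same spirit as the paper's proof, and your arithmetic via $\opt\ge 2d_\ell$, $\opt\ge 2(d_i+d_t)$ and $L_c-l(e_{max})\le d_s+d_\ell\le 2d_\ell$ is essentially sound. However, there is a genuine gap: the strategy you analyze is not a valid online algorithm under the paper's information model. You repeatedly condition the agents' moves on structure that is not revealed at the time the moves are made: ``two designated agents run the original two-agent ALE procedure on the cycle'' requires knowing which of the two unexplored edges at $v_i$ is the cycle edge and which is the tail (node labels carry no information, so they are indistinguishable); ``its shadow \ldots peels off to traverse $v_i\to v_t\to v_i$'' requires the same knowledge; ``if $s$ lies on the tail, they first walk from $s$ to the intersection'' and ``if $s$ is on the tail it is instead sent from $s$ to $v_t$ and back'' require knowing, from a degree-two start, whether $s$ is on the tail at all and which direction leads to $v_i$ resp.\ $v_t$. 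All the quantitative facts you import from ALE on cycles (go-phase traversal exactly $L_c-l(e_{max})$, the pair ending at the endpoints of $e_{max}$, $1.5$-competitiveness of the pair, $T_i\le L_c-l(e_{max})$) are only justified for this idealized designated-cycle pair; if the ``shadow'' accidentally takes the cycle edge and the ``ALE agent'' walks the tail, the scheduling rule and the traversal accounting you rely on are no longer the ones you analyzed.

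This indistinguishability is exactly what the paper's proof is organized around: at the intersection it sends one agent per visible direction and keeps the global avoid-longest-edge rule (so no agent ever needs to know which edge is the tail); from a degree-two start it either lets the third agent wait at $s$ and later walk the shortest \emph{revealed} route to $v_i$ (three agents, whence only ratio $3$, matching your bottleneck), or sends pairs in both directions that split upon discovering $v_i$ (four agents, ratio $2$), explicitly noting that tail-start and cycle-start cannot be told apart; the time bounds are then obtained from the sequential movement as sums of the form $d_t+d_i+d_s+d_\ell+\max(\cdot)$ with a case distinction $e_{mid}=e_{max}$ versus $e_{mid}\neq e_{max}$ and the inequality $d_s+l(e_{max})\le \opt_\ell\le L/2\le d_\ell$. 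To repair your argument you would have to restate the strategy in this label-free form (agents/pairs in all visible directions, roles assigned only after $v_t$ or the closed cycle is actually discovered) and redo the timing for the resulting schedule, rather than for the clean cycle-pair/tail-agent separation you assumed.
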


\section{Cycles in the Energy Model} \label{cycles}

In this section, we show that two agents suffice to obtain a $1$-competitive algorithm for the exploration
of cycles in the energy model. For this we modify the \textit{ALE Algorithm} by Higashikawa et al.\ \cite{Higashikawa2012}. We first show 
that the \textit{ALE Algorithm} has a competitive ratio greater than $1$ in the energy model. After that we construct a modified
algorithm and prove that it reaches $1$-competitiveness in the energy model.

\subsection{ALE is $1.5$-competitive in the energy model}

\begin{theorem}[The \textit{ALE} algorithm and the energy model]\label{AleLowerTad}
    The \textit{ALE} algorithm has a competitive ratio of $1.5$ in the energy model on cycles.
\end{theorem}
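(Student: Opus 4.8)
The plan is to prove matching upper and lower bounds of $1.5$. Throughout, parametrise the cycle by clockwise arc length from $s$: let the nodes in clockwise order be $s=u_0,u_1,\dots,u_{n-1}$, let $P_j$ be the clockwise distance from $s$ to $u_j$, and let $f_j=(u_j,u_{j+1})$. The midpoint of the cycle is the point at distance $L/2$ from $s$ in both directions; either it lies in the interior of an edge, which is then $e_{mid}$, with bounding arcs $p_s,p_l$ of lengths $d_s\le d_l<L/2$ and $d_s+d_l+l(e_{mid})=L$, or it is a node, in which case I put $d_s=d_l=L/2$. The first step is to pin down $\opt$: the offline strategy that leaves $e_{mid}$ untraversed and sends one agent out and back along $p_s$ and the other out and back along $p_l$ costs $\max(2d_s,2d_l)=2d_l$; conversely, in any offline solution the agents that visit the two endpoints $v_s,v_l$ of $e_{mid}$ must, between them, travel at least $2d_l$, which I would verify by a short case analysis on which of the two arcs each of them uses (including the case where a single agent covers both endpoints, and the all-the-way-around option). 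Hence $\opt=2d_l$, with $\opt=L$ when the midpoint is a node.

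The second step is to describe ALE's run and get the trivial bound. On a cycle ALE sends the two agents in opposite directions and repeatedly advances the one facing the cheaper frontier edge, until they occupy adjacent nodes $u_a,u_{a+1}$ with a single untraversed edge $e^\ast=f_a$; then each agent returns along a shortest path. In the energy model the cost is therefore $\onl=\max\bigl(P_a+\min(P_a,L-P_a),\,(L-P_{a+1})+\min(L-P_{a+1},P_{a+1})\bigr)=\min\bigl(2\max(P_a,L-P_{a+1}),\,L\bigr)$. In particular $\onl\le L$, and since uncovering $e^\ast$ offline costs exactly $2\max(P_a,L-P_{a+1})$ we also get $\opt\le\onl$.

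For the upper bound, if ALE's untraversed edge is $e_{mid}$ itself then $\onl=\min(2d_l,L)=2d_l=\opt$ and we are done. Otherwise $e^\ast\neq e_{mid}$, and I claim $\onl=L$ while $\opt=2d_l\ge\tfrac23 L$, which gives ratio at most $\tfrac32$. That $\onl=L$ is immediate: since $e^\ast$ lies strictly on one side of the midpoint, one of $P_a$, $L-P_{a+1}$ exceeds $L/2$. For $\opt\ge\tfrac23 L$ it suffices, given $d_s\le d_l$ and $d_s+d_l+l(e_{mid})=L$, to show $l(e_{mid})\le d_l$, since then $L=d_s+d_l+l(e_{mid})\le 3d_l$. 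This inequality is the heart of the proof and is where the ALE rule enters: because $e^\ast\neq e_{mid}$, exactly one agent traversed $e_{mid}$ during the exploration, and it was permitted to do so only at a step where the other agent was standing on the opposite bounding arc, facing a frontier edge of length at least $l(e_{mid})$; that bounding arc is one of $p_s,p_l$ and so has total length at most $d_l$, hence the edge that agent faced has length at most $d_l$, whence $l(e_{mid})\le d_l$. The case where the midpoint is a node is easier, since then $\opt=L\ge\onl$.

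For the lower bound I would exhibit the cycle consisting of three edges of equal length $c$ (an equilateral triangle, $L=3c$), where the midpoint lies in the interior of an edge, so $d_l=c$ and $\opt=2c$. Resolving the two ties in the worst way, ALE advances one agent a first step and then, facing another equal-weight edge, the same agent a second step, so it ends two steps from $s$ with forward distance $2c>L/2$ and an untraversed edge of length $c$; returning then costs that agent $2c+c=3c=L$, so $\onl=3c=1.5\,\opt$. (If a tie-free instance is preferred, perturbing the lengths to $c-\varepsilon,c,c+\varepsilon$ forces exactly this run and yields ratio $3c/(2(c+\varepsilon))\to\tfrac32$ as $\varepsilon\to0$.) Combining the two bounds gives competitive ratio exactly $1.5$. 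The main obstacle is the inequality $l(e_{mid})\le d_l$ above: one has to argue carefully from the ``advance the cheaper frontier'' rule that once the untraversed edge differs from $e_{mid}$ the edge $e_{mid}$ cannot be crossed unless some edge on the shorter-or-equal side matches its length, and one has to keep the bookkeeping of directions, the two arcs, and the degenerate configurations (midpoint on a node; an agent that never moves) straight; the remaining steps are routine.
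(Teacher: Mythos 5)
Your proof is correct, but it takes a genuinely different route from the paper's. The paper proves only the lower bound explicitly: its Figure is exactly your perturbed near-equilateral triangle (edge lengths $1-\varepsilon$, $1$, $1+\varepsilon$), on which ALE sends one agent around two edges and back over the heavy edge for cost $3$ versus $\opt=2(1+\varepsilon)$; the matching upper bound is not re-proved but inherited from the known $1.5$-competitiveness of ALE in the time model (Higashikawa et al.) together with the observation in \S\ref{competitive} that $c$-competitiveness in the time model implies at most $c$-competitiveness in the energy model. You instead establish the energy upper bound from first principles: you identify $\opt=2d_\ell$, bound $\onl\leq L$, and isolate the key structural fact that ALE can cross $e_{mid}$ only when the other agent faces an untraversed edge of one of the two arcs, forcing $l(e_{mid})\leq d_\ell$ and hence $\opt\geq \tfrac{2}{3}L$; this is more work than the paper's one-line reduction, but it is self-contained and explains exactly when ALE matches the optimal energy paths (when the spared edge is $e_{mid}$) versus when it pays $L$. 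One small point to tighten: your justification of $\opt\geq 2d_\ell$ speaks of the total distance the agents visiting $v_s,v_\ell$ travel ``between them,'' which bounds the sum rather than the maximum; the cleaner argument is simply that the agent visiting $v_\ell$ must make a round trip of length at least $2d(s,v_\ell)=2d_\ell$, which is what the energy model charges. With that adjustment, and your handling of the degenerate configurations (midpoint on a node, an agent that never moves), the argument goes through.
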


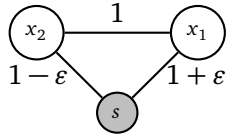
\begin{figure}[H]
    \begin{center}
    \begin{tikzpicture}[node distance={15mm}, thick, main/.style = {draw, circle}, scale=0.5] 
        \node[main, fill=gray!50, minimum size=0.5cm] (2) {$_{s}$ }; 
        \node[main, minimum size=0.5cm] (1) [above left of = 2] {$_{x_2}$ }; 
        \node[main, minimum size=0.5cm] (3) [above right of = 2] {$_{x_1}$ }; 
        \draw (1) to node[left] {$1-\varepsilon$} (2);
        \draw (2) to node[right] {$1+\varepsilon$} (3);
        \draw (3) to node[above] {$1$} (1);
    \end{tikzpicture} 
    \end{center}
    \caption[ALE is not $1$-competitive]{Since $(s,x_1)$ is the longest edge, one agent moves clockwise from $s$ to $x_1$.}\label{ale}
\end{figure}

\begin{proof}
    Let $0 < \varepsilon < 0.5$. Consider the graph shown in Figure \ref{ale}. Since $(s,x_1)$ is the longest edge in the graph, one agent explores the graph clockwise from $s$ to $x_1$. After the agent
    reaches $x_1$ the shortest path to $s$ is the edge $(x_1, s)$, leading to a cost of $3$. An optimal strategy would have sent both agents from $s$ to $x_1$ 
    and $x_2$ respectively, leading to an exploration cost of $2(1+\varepsilon)$. The overhead of ALE in this case is $\frac{3}{2(1+\varepsilon)}$, leading to a competitive ratio of
    $1.5$ for arbitrary small $\varepsilon$.
\end{proof}

\subsection{The AMP Algorithm}

To construct a $1$-competitive algorithm for the energy model we make use of an observation made by Higashikawa et al.\ concerning the optimal offline exploration 
of a cycle \cite{Higashikawa2012}. Given a cycle graph $C=(V,E)$, a starting node $s \in V$, and the midpoint of the cycle $m$. Then there are two cases:

\begin{enumerate}
    \item $m$ falls onto a node $v_{mid} \in V$ 
    \item $m$ falls onto an edge $e_{mid} = (v_\ell, v_s) \in E$ 
\end{enumerate}

In case 1 the optimal offline strategy sends two agents clockwise and counterclockwise to $v_{mid}$ and back, leading to an exploration time of exactly $L$. In case 2 the agents
are sent to $v_\ell$ and $v_s$ and back, while the edge $e_{mid}$ is not traversed by any agent. The following algorithm sends the agents around the cycle in opposite directions,
with only one moving agent at a time, like the \textit{ALE Algorithm} does, but instead of choosing the locally cheaper edge in each step, the edge which minimizes the distance
traversed by any agent up to that point is chosen. We show that with this change the agents traverse exactly the same paths as the offline optimal strategy, leading to a competitive ratio
of $1$ for the energy model while still achieving the optimal competitive ratio of $1.5$ for the time model.

    \begin{algorithm}[H]
        \caption{AMP (Avoid Midpoint)}\label{alg:cap}
        \begin{algorithmic}
        \small
        \Require A unknown cycle graph $G=(V,E)$, Two agents $a_1, a_2$, a starting node $s \in V$
        \State $d(a_1) \gets 0; d(a_2) \gets 0$ \Comment{Agents track their already traversed distance}
        \State $Exp \gets \{s\}$ \Comment{Keep track of already explored vertices}
        \State Let $n(a_1)$ and $n(a_2)$ describe the next nodes seen by the agents
        \State Assign the two neighbors of $s$ randomly to $n(a_1)$ and $n(a_2)$
        \While{$n(a_1) \not\in Exp \lor n(a_2) \not\in Exp$} \Comment{While graph is not explored}
        \If{$d(a_1) + l(n(a_1)) < d(a_2) + l(n(a_2))$}  
            \State $a_1$ traverses edge to $n(a_1)$ 
            \State $Exp \gets Exp \cup \{n(a_1)\}$
            \State $d(a_1) \gets d(a_1) + l(n(a_1))$
            \State $n(a_1) \gets$ next revealed node
        \Else 
            \State $a_2$ traverses edge to $n(a_2)$
            \State $Exp \gets Exp \cup \{n(a_2)\}$
            \State $d(a_2) \gets d(a_2) + l(n(a_2))$
            \State $n(a_2) \gets$ next revealed node
        \EndIf
        \EndWhile
        \State $a_1$ and $a_2$ return to $s$ using their shortest paths.
        \end{algorithmic}
        \end{algorithm}

\begin{theorem}[Cycles in the Energy Model]\label{ampCycles}
    For the energy model, the \textit{AMP (Avoid Midpoint)} Algorithm \ref{alg:cap} explores a cycle with
    a competitive ratio of $1$. For the time model the algorithm explores a cycle with
    a competitive ratio of $1.5$.
\end{theorem}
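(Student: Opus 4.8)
The plan is to prove the stronger statement that \textit{AMP}'s two agents traverse exactly the edges used by the optimal offline solution described above (sending the agents to $v_s$ and $v_\ell$ — resp.\ both toward $v_{mid}$ — and back), and then to read off both competitive ratios, the only loss in the time model coming from the fact that \textit{AMP} moves a single agent at a time during the while-loop.

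First I would fix notation: relabel the agents so that $a_1$ walks along $p_s$ toward $v_s$ and $a_2$ along $p_\ell$ toward $v_\ell$, and record the two inequalities expressing that $m$ lies on $e_{mid}$ (or equals $v_{mid}$): $d_\ell\le d_s+l(e_{mid})$ and $d_s\le d_\ell+l(e_{mid})$, equivalently $2d_\ell\le L$; in the node case these degenerate to $d_s=d_\ell=L/2$ and $l(e_{mid})=0$. By Higashikawa et al.'s observation, $\opt=2d_\ell$ in the edge case and $\opt=L$ in the node case, and (since any collection of agent walks can be run concurrently, while no agent traverses more than $\mathcal{T}$) these values are the optimum in the energy model and the time model alike.

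The heart of the argument is a lemma proved by induction over iterations of the while-loop: each agent only ever moves forward along its own path, $a_1$ staying in $p_s\cup\{v_s\}$ and $a_2$ in $p_\ell\cup\{v_\ell\}$; in particular no agent ever traverses $e_{mid}$ or steps past $v_{mid}$. The only danger is that \textit{AMP} selects an agent already sitting at its far endpoint. Suppose $a_1$ is at $v_s$: if $a_2$ is also at $v_\ell$ then both look-ahead nodes are explored and the loop has already stopped; otherwise $a_2$ is strictly before $v_\ell$, so $d(a_2)+l(n(a_2))\le d_\ell$, while $d(a_1)+l(n(a_1))=d_s+l(e_{mid})\ge d_\ell$ by the midpoint inequality, hence \textit{AMP}'s strict comparison moves $a_2$, not $a_1$. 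The case "$a_2$ at $v_\ell$" is symmetric, using $d_s\le d_\ell$; the single borderline equality that could make the deterministic \texttt{else}-branch misfire occurs only in the node case with $a_2$ one edge short of $v_{mid}$, and there $n(a_1)$ coincides with $a_2$'s (explored) position, so the loop has again terminated. Granting the lemma, termination forces $a_1$ to $v_s$ and $a_2$ to $v_\ell$ (in the node case, one agent at $v_{mid}$ and the other at most one edge away); since $p_s\cup p_\ell$ is the whole cycle minus $e_{mid}$, every node has been visited, and the concluding shortest-path return sends $a_1$ back along $p_s$ and $a_2$ back along $p_\ell$ — these being the shorter options exactly by the midpoint inequalities. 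Thus $d(a_1)=2d_s$ and $d(a_2)=2d_\ell$ in the edge case, while in the node case one agent traverses exactly $L$ and the other at most $L$.

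Finally I would cash this out. In the energy model the cost is $\max(d(a_1),d(a_2))=2d_\ell=\opt$ (edge case), resp.\ $L=\opt$ (node case), so the ratio is $1$, which is optimal. In the time model the while-loop moves one agent at a time, taking time equal to the total explored length $d_s+d_\ell$ (resp.\ $\le L$), after which the two return trips run concurrently in additional time $\max(d_s,d_\ell)=d_\ell$ (resp.\ $L/2$); hence $\mathcal{T}\le d_s+2d_\ell$ (resp.\ $3L/2$), and with $d_s\le d_\ell$ the ratio is at most $\frac{d_s+2d_\ell}{2d_\ell}\le\frac32$ (resp.\ $\frac{3L/2}{L}=\frac32$). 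Together with the known $1.5$ lower bound of Higashikawa et al.\ for two-agent cycle exploration in the time model, this makes \textit{AMP} exactly $1.5$-competitive there. The step I expect to be fiddliest is the inductive lemma — in particular making \textit{AMP}'s strict test and its fixed tie-break cooperate at the boundary, and treating the node case separately (where $p_s$ and $p_\ell$ meet at $v_{mid}$, so the claim "no unexplored node lies beyond an agent still short of its endpoint" needs a small adjustment); everything after the lemma is routine accounting.
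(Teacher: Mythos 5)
Your proposal is correct and takes essentially the same route as the paper: show that \textit{AMP}'s agents never traverse $e_{mid}$ (resp.\ never pass $v_{mid}$), hence replicate the optimal offline walks to $v_s$ and $v_\ell$, giving ratio $1$ in the energy model, and then bound the time by $d_s+d_\ell$ for the sequential exploration plus $d_\ell$ for the concurrent return, i.e.\ at most $3d_\ell\le 1.5\cdot 2d_\ell$. Your inductive lemma with the explicit strict-comparison/tie-break check is just a more formal rendering of the paper's case analysis (which argues directly that once an agent sits at $v_s$ or $v_{mid}$ the comparison forces the other agent to keep moving), so there is no substantive difference apart from a harmless labeling slip in the borderline node-case discussion.
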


\begin{proof}
    We can prove the $1$-competitiveness of the \textit{AMP algorithm} for the energy model, by showing that
    the agents only traverse the cycle to $v_s$ and $v_\ell$ (or $v_{mid}$) before returning, and never traverse the edge $e_{mid}$. We then use this result to prove
    the $1.5$-competitive ratio for the time model.

    \paragraph*{Energy Model} If both agents reach $v_{mid}$ at the same time, all nodes have been visited and both agents have traversed a distance of $L/2$ before and $L$ after backtracking, which matches 
    the result of the optimal offline strategy.\\
     If an agent $a_1$ reaches $v_{mid}$ first, having traversed a distance $d_1 = L/2$, the other agent $a_2$ must have traversed a distance $d_2 < d_1$. Thus, $a_2$ moves to $v_{mid}$ without stopping.
     Then both agents have traversed a distance of $L/2$ when all nodes have been visited and $L$ after backtracking.\\
    Assume w.l.o.g an agent $a_1$ is currently located on $v_s$ after having traversed distance $d_s$ and has not yet traversed $e_{mid}$, while the other agent $a_2$ has not reached $v_\ell$ yet.
    Since $d_\ell < L/2 < d_s + l(e_{mid})$ the agent $a_2$ traverses to $v_\ell$ without stopping. At this point all nodes have been visited and the agents backtrack, having taken the same paths
    as they would have in an optimal offline exploration.

    \paragraph*{Time Model} Recall that an optimal strategy takes time $2d_\ell$. In the AMP algorithm, the agents traverse the distances $d_s$ and $d_\ell$. 
    Since only one agent is traversing an edge at a time until all nodes have been visited, the maximum 
    time needed to visit all nodes is $d_s + d_\ell \leq 2d_\ell$. After backtracking to $s$ the full exploration time is at most $3d_\ell$, leading to a competitive ratio of $1.5$.
\end{proof}
\section{AMP on Tadpole Graphs} \label{tadpoles}
In this section we cover multi-agent graph exploration on tadpole graphs. In \S\ref{lower_bound_tad} we show a lower bound of $1.5$ for the online exploration of tadpole graphs with $k\geq2$ agents in the time-, and $k=2$ agents in the energy model.
In \S\ref{tadpole_two} we show that random choice with two agents leads to a $2.5$-competitive strategy. In \S\ref{tadpole_three} we provide a modification of the \textit{AMP Algorithm} from \S \ref{cycles} that yields a competitive ratio of $1$ for the energy- and $2$ for the time model, 
using three agents. In \S\ref{tadpole_four} we describe an exploration strategy which matches the competitive lower bound of $1.5$ using $k=4$ agents. In Table \ref{table:tadpole} an overview of the results of this section is given.

    \begin{table}[H]
        \small
        \centering
        \begin{tabular}{|l|ll|ll|}
        \hline
                  & \multicolumn{2}{c|}{Time Model}                                     & \multicolumn{2}{c|}{Energy Model}                                   \\ \hline
                  & \multicolumn{1}{c|}{Lower Bound} & \multicolumn{1}{c|}{Upper Bound} & \multicolumn{1}{c|}{Lower Bound} & \multicolumn{1}{c|}{Upper Bound} \\ \hline
        1 Agent   & \multicolumn{1}{l|}{2 \cite{Brandt2020}}           & 2\cite{Brandt2020}                                & \multicolumn{1}{l|}{2\cite{Brandt2020}}           & 2\cite{Brandt2020}                               \\ \hline
        2 Agents  & \multicolumn{1}{l|}{1.5 (Thm. \ref{TadpoleTimeLower})}         & 2.5 (Thm. \ref{TadpoleUpperTwo})                              & \multicolumn{1}{l|}{1.5 (Thm. \ref{TadpoleEnergyLower})}            & 2.5 (Thm. \ref{TadpoleUpperTwo})                               \\ \hline
        3 Agents  & \multicolumn{1}{l|}{1.5 (Thm. \ref{TadpoleTimeLower})}         & 2 (Thm. \ref{TadpoleUpperThree})                               & \multicolumn{1}{l|}{1}           & 1 (Thm. \ref{TadpoleUpperThree})                               \\ \hline
        4+ Agents & \multicolumn{1}{l|}{1.5 (Thm. \ref{TadpoleTimeLower})}         & 1.5 (Thm. \ref{TadpoleUpperFour})                             & \multicolumn{1}{l|}{1}           & 1 (Thm. \ref{TadpoleUpperThree})                               \\ \hline
        \end{tabular}
        \caption{Results for multi-agent exploration of tadpole graphs.} \label{table:tadpole}
    \end{table}

\subsection{A lower bound for tadpole graphs}\label{lower_bound_tad}

In this section we show lower bounds for graph exploration on tadpole graphs. We first show that the exploration of any tadpole graph has a competitive ratio of at least
$1.5$ for $k\geq 2$ agents in the time model. For the energy model we show a lower bound of $1.5$ with $k=2$ agents. 
\subsubsection*{$k\geq2$ agents in the time model} We modify the graph construction Higashikawa et al.\ used to show the lower bound of $1.5$ for cycles with $2$ agents \cite{Higashikawa2012},
to show a lower bound of $1.5$ for the time model for tadpole graphs with $k \geq 2$ agents, by simply attaching a tail consisting of a single edge with of length $\varepsilon$ to the starting node $s$.
Note that the analysis stays basically identical to the analysis by Higashikawa et al.\ \cite{Higashikawa2012}, since in the offline case the tail can always be explored in parallel, having no impact on the
optimal exploration time.

\begin{theorem}[Time Model: Lower bound for exploring tadpole graphs]\label{TadpoleTimeLower}
    For any number of agents $k \geq 2$, any online exploration strategy for tadpole graphs has a competitive ratio of at least $1.5$ in the time model.
\end{theorem}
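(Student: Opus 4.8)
The plan is to adapt the classical adversary argument of Higashikawa et al.\ for cycles with two agents to tadpole graphs, exploiting the fact that a tail of negligible length $\varepsilon$ attached at $s$ can be explored "for free" in the offline solution (some agent walks out the tail and back while the others handle the cycle), so it contributes nothing to $\opt$ but also gives the online algorithm no useful head start. Concretely, I would let the adversary present $s$ together with its two cycle-neighbors via edges of weight $1$, and a third neighbor reached by an edge of weight $\varepsilon$ which turns out to be the entire tail (so $L_t=\varepsilon$, $v_i=s$). The cycle itself is built so that, no matter how the online strategy distributes and moves its $k$ agents around the cycle, at the moment the last cycle node is first visited the adversary can still complete the cycle in a way that forces a large return cost — this is exactly the cycle lower bound, and the tail only adds $O(\varepsilon)$ to the online cost (the tail must still be visited and its agent returned) while adding nothing to $\opt$.

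The key steps, in order: (1) State the offline cost. With full knowledge, two agents traverse the cycle in opposite directions to its midpoint and back in time $L_c$ (or $L_c$ with the midpoint edge skipped, as in the $\opt$ discussion for cycles), and one further agent — or one of the same agents, since $k\ge 2$ and $v_i=s$ — walks the tail of length $\varepsilon$ out and back in time $2\varepsilon \le L_c$, all in parallel; hence $\opt(T,s,k) = L_c$ for the construction, with the tail irrelevant. (2) Reproduce the adversary's cycle behavior: as the online agents move, the adversary keeps the weights of not-yet-revealed cycle edges undetermined, and uses the standard argument that some agent is forced to complete a "last" edge of the cycle such that the shortest return path from its current position has length at least, essentially, $L_c/2$; combined with the fact that only sequential single-edge progress on the far side of the cycle is possible from the online side (agents share knowledge but still must physically traverse edges), the total online time is at least $\tfrac32 L_c - O(\varepsilon)$. (3) Account for the tail: the tail edge of weight $\varepsilon$ must be traversed by some agent and that agent returned to $s$, adding at most a further $O(\varepsilon)$; this does not help the bound but must be noted so the construction is a valid tadpole instance (the graph genuinely has a tail). (4) Take $\varepsilon \to 0$: the competitive ratio tends to $\tfrac32$, and since this holds for every $k\ge 2$ (extra agents cannot speed up the forced sequential exploration of the adversarially-weighted far arc), the bound of $1.5$ follows for all $k\ge 2$ in the time model.

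The main obstacle is step (2): making precise why additional agents ($k>2$) do not circumvent the adversary. With many agents one must ensure the adversary can still choose edge weights online so that, even if several agents fan out along both arcs of the cycle, the last-visited cycle node is reached only after the "expensive" configuration is committed to, and the return cost from wherever the exploring agents end up is still $\Omega(L_c/2)$ more than $\opt$. The cleanest way is to keep the adversary's construction essentially one-dimensional per arc — heavy-weight "blocking" edges whose weights are revealed only upon traversal — so that physically reaching the far part of each arc costs nearly $L_c/2$ regardless of how many agents try; then whichever agent first completes the cycle is at distance $\approx L_c/2$ from $s$ along an already-traversed path, yielding total time $\approx L_c/2 + L_c/2 + L_c/2$. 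I would lean on the remark in the excerpt that "the analysis stays basically identical to the analysis by Higashikawa et al." and import their per-arc argument verbatim, inserting only the observation that the shared-knowledge model does not let an agent teleport past an un-traversed edge, so the $k\ge2$ case reduces to the $k=2$ case on each arc. The remaining parts — the offline cost computation and the $\varepsilon\to 0$ limit — are routine.
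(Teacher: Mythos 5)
Your proposal follows essentially the same route as the paper: it takes Higashikawa et al.'s adversarial cycle construction, attaches an $\varepsilon$-length tail at $s$ (explored in parallel by the offline optimum, so $\opt$ is unaffected), observes that extra agents cannot help because first-visit times along the adversarially hidden arc are bounded below by distance regardless of $k$, and lets $\varepsilon \to 0$, which is exactly how the paper proves the $1.5$ bound for all $k \geq 2$. One small caution: your aside about weights being revealed ``only upon traversal'' is neither needed nor consistent with the model (incident edge weights are revealed when an endpoint is visited); the imported construction hides the adversarial closing edge legitimately because both of its endpoints remain unvisited until the algorithm commits, which is the mechanism the paper's two-case analysis uses.
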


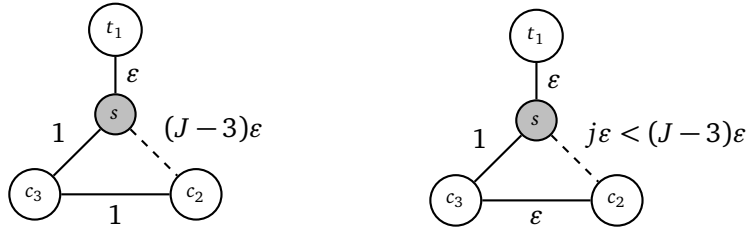
\begin{figure}[H]
    \begin{center}
    \begin{tikzpicture}[node distance={15mm}, thick, main/.style = {draw, circle}, scale=0.5] 
        \node[main, fill=gray!50, minimum size=0.5cm] (3) [] {$_{s}$ }; 
        \node[main, minimum size=0.5cm] (2) [below left of = 3] {$_{c_3}$ }; 
        \node[main, minimum size=0.5cm] (4) [below right of = 3] {$_{c_2}$ };
        \node[main, minimum size=0.5cm] (6) [above of = 3, above=-2em] {$_{t_1}$ };
        \draw (2) [] to node[below] {$1$} (4);
        \draw (2) [] to node[above left] {$1$} (3);
        \draw (3) [dashed] to node[above right] {$(J-3)\varepsilon$} (4);
        \draw (3) [] to node[right] {$\varepsilon$} (6);
    \end{tikzpicture} 
    \hspace*{5em}
    \begin{tikzpicture}[node distance={15mm}, thick, main/.style = {draw, circle}, scale=0.5] 
        \node[main, fill=gray!50, minimum size=0.5cm] (3) [] {$_{s}$ }; 
        \node[main, minimum size=0.5cm] (2) [below left of = 3] {$_{c_3}$ }; 
        \node[main, minimum size=0.5cm] (4) [below right of = 3] {$_{c_2}$ };
        \node[main, minimum size=0.5cm] (6) [above of = 3, above=-2em] {$_{t_1}$ };
        \draw (2) [] to node[below] {$\varepsilon$} (4);
        \draw (2) [] to node[above left] {$1$} (3);
        \draw (3) [dashed] to node[above right] {$j\varepsilon < (J-3)\varepsilon$} (4);
        \draw (3) [] to node[right] {$\varepsilon$} (6);
    \end{tikzpicture} 
    \end{center}
    \caption[Tadpole lower bound]{Any number of agents $k\geq2$ cannot explore the given graph with a competitive ratio lower than $1.5$. The dotted line is a path consisting entirely of edges with length $\varepsilon$.
    Case 1 (Left): the graph an adversary reveals if $(s, c_3)$ is not traversed after an agent traversed a distance of $(J-3)\varepsilon$ on the path between $s$ and $c_2$.
    Case 2 (Right): the graph an adversary creates when $(s,c_3)$ is traversed before an agent traverses a distance of $j\varepsilon<(J-3)\varepsilon$ on the path between $s$ and $c_2$.}
    \label{tadpole-time}
\end{figure}
\begin{proof}
    Consider the graphs shown in Figure \ref{tadpole-time}. In the beginning, the agents only see two edges of length $\varepsilon$ and one edge of length $1$. There are two cases depending on when an agent decides to 
    explore the edge $(s,c_3)$. Let $\varepsilon=\frac{1}{J}$ and $J \in \mathbb{N}$.
    \begin{enumerate}
        \item An agent traverses a distance of $(J-3)\varepsilon$ on $(s,c_2)$ before an agent enters the edge $(s,c_3)$.
        \item An agent starts traversing $(s,c_3)$ when a distance $j\varepsilon<(J-3)\varepsilon$ is traversed on $(s,c_2)$
    \end{enumerate}
    \paragraph*{Case 1} In this case the adversary sets $l(c_2, c_3)=1$. The optimal offline strategy would have sent one agent from $s$ to $c_3$ and back, while another agent in parallel explores the edge
    $(s,t_1)$ and the path $(s,c_2)$, resulting in an exploration time of $2$. The online strategy has already taken a time of at least $(1-3\varepsilon)$ and needs at least time $2$ to complete the exploration,
    leading to a minimum exploration time of $(3-3\varepsilon)$.
    \paragraph*{Case 2} In this case the adversary sets $l(c_2, c_3)=\varepsilon$ and lets $c_2$ be the next visited node by the searcher on the path from $s$ to $c_2$. 
    The optimal offline strategy would send one agent from $s$ to $c_3$ via $c_2$ and back, while another agent explores the tail in parallel, leading to an exploration time of $2(j+1)\varepsilon$. The online strategy
    has already traversed a distance of at least $(j-1)\varepsilon$ before an agent started traversing $(s,c_3)$. After reaching $c_3$ the fastest way to $s$ is traversing the graph via $c_2$. Since
    $1=J\varepsilon\geq (j+3)\varepsilon$, the entire exploration
    has at least length $(j-1)\varepsilon + (j+3)\varepsilon + (j+1)\varepsilon = 3(j+1)\varepsilon$.
\end{proof}
\subsubsection*{Two agents in the energy model} For showing $1.5$-competitiveness in the energy model, we modify the construction Dynia et al.\ used to show a lower bound of $1.5$ for the exploration
of trees \cite{Dynia2006}, to create a tadpole graph.

\begin{theorem}[Energy Model: Lower bound for exploring tadpole graphs]\label{TadpoleEnergyLower} 
    For $k=2$ agents, no online exploration strategy on tadpole graphs can have a competitive ratio better than $1.5$ in the energy model.
\end{theorem}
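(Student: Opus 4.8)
The plan is to construct, for every small $\varepsilon>0$, a tadpole graph on which an adaptive adversary forces any two-agent online strategy to send some agent a distance of at least $3-O(\varepsilon)$, while an offline optimum finishes with maximum distance $2+O(\varepsilon)$; letting $\varepsilon\to 0$ then gives the ratio $\tfrac32$. The instance modifies the tree used by Dynia et al.~\cite{Dynia2006}, whose lower bound rests on a claw $K_{1,3}$ with three unit edges at the start: there the adversary turns into a length-$2$ branch whichever edge is the second one some agent sets foot on, which two agents cannot avoid by pigeonhole. I would reuse this, taking the start node $s$ to be the intersection $v_i$, so that from $s$ there are exactly three incident edges of length $1$, mutually indistinguishable at time $0$. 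Two of them will eventually be declared to be the first unit of the two sides of a cycle that closes up, beyond distance $1$, through additional edges of total length $\varepsilon$ (so $L_c=2+O(\varepsilon)$), and the third will be declared to be the tail, kept so short that $v_t$ ends up at distance $1+O(\varepsilon)$ from $s$. Since the algorithm cannot tell which two branches form the cycle and which one is the tail, it cannot dedicate one agent to the cycle and the other to the tail from the outset.

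The adversary keeps all three branches ambiguous, revealing only a degree-$2$ vertex with a tiny continuing edge the first time an agent reaches a branch's far endpoint --- consistent both with a cycle side and with an unfinished tail --- and it watches the probing. Since three branches must all be visited by two agents, some agent must, at some point, be the first to set foot on two distinct branches; at that moment the adversary freezes the instance, declaring as the tail a branch that this agent has just reached the ``hard way'' (through $s$, after already travelling into another branch), and declaring the remaining two branches to be the cycle, closing up all still-ambiguous edges as above. On the frozen instance the offline optimum sends one agent once around the cycle ($L_c=2+O(\varepsilon)$) and the other agent out along the tail and back ($2+O(\varepsilon)$), so $\opt=2+O(\varepsilon)$. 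In the online run the tail becomes identified only after this probing phase, by which time the agent that ends up visiting $v_t$ has already spent at least $1-O(\varepsilon)$ travelling into (and partly back out of) a branch that turned out to be a cycle side, and it still owes the tail's out-and-back of length $2-O(\varepsilon)$; hence it travels at least $3-O(\varepsilon)$, so $\onl\ge 3-O(\varepsilon)$ and the competitive ratio is at least $\tfrac{3-O(\varepsilon)}{2+O(\varepsilon)}\to\tfrac32$.

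The main obstacle, and where essentially all of the work lies, is making the sentence ``some agent is the first to set foot on two branches, so the adversary can freeze'' into a watertight forcing argument. I expect three delicate points. First, one must rule out the superficially attractive strategy of dedicating one agent to sweep a whole branch while parking the second agent at $s$ for the eventual tail: the adversary must not fix which branch is the tail until after the parked agent has been committed, or until the sweeping agent has walked far enough that, if its branch turned out to be the tail, a costly turn-around would already be unavoidable --- and showing this is always achievable means enumerating, up to symmetry, the finitely many patterns by which two agents can first touch the three branches and exhibiting the adversary's reply for each. Second, agents may wait or backtrack, so the adversary's commitment rule should be phrased in terms of distance travelled rather than elapsed time, which renders waiting useless; this needs to be stated with care. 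Third, one should verify that $\opt=2+O(\varepsilon)$ on the frozen instance --- for a tadpole with $s=v_i$, a cycle of length $\approx 2$ and a tail of length $\approx 1$, this is the straightforward two-agent split (one lap of the cycle, one trip along the tail), a small variant of the cycle analysis of Higashikawa et al.~\cite{Higashikawa2012}. Once the branch-pattern case analysis is carried out, the remaining estimates are routine bookkeeping of the $\varepsilon$-terms.
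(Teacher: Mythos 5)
There is a genuine gap: your instance family is too weak to force the ratio $\tfrac32$. Concretely, with $s=v_i$, three unit-length branches, a cycle of length $2+O(\varepsilon)$ closed by an $O(\varepsilon)$-connector, and a tail of length $1+O(\varepsilon)$, the following online strategy is $(1+O(\varepsilon))$-competitive in the energy model against \emph{every} instance in your family, so no adaptive commitment rule can help the adversary: send $a_1$ down an arbitrary branch and let $a_2$ wait at $s$ (waiting is free in the energy model). If the probed branch terminates, it was the tail; $a_1$ returns (distance $2+O(\varepsilon)$) and $a_2$ then tours the whole cycle ($2+O(\varepsilon)$). Otherwise $a_1$ is inside the tiny connector; whenever the adversary finally attaches the connector to one of the two untouched branches, $a_1$ just continues around the cycle and home ($2+O(\varepsilon)$), while $a_2$ explores the remaining branch --- which is then the tail --- fresh from $s$ ($2+O(\varepsilon)$). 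In all cases the maximum distance is $2+O(\varepsilon)=\opt\cdot(1+O(\varepsilon))$. Your freeze rule never produces the claimed situation for this strategy: the agent that is ``first to set foot on two branches'' enters the second branch through the connector, not through $s$, and the agent that visits $v_t$ is the parked one, which has wasted nothing, so the asserted $\onl\ge 3-O(\varepsilon)$ fails. The structural reason is that your three jobs are perfectly balanced (a tail round trip $\approx 2$ equals one full cycle tour $\approx 2$, total $\approx 4=2\opt$), so late information never prevents an even split, and in the energy model an agent can always buy full information by waiting.

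The paper's construction avoids exactly this: $s$ lies on the cycle (not at $v_i$), the tail has length $\approx 2$ (edges $1-\varepsilon$ and $1+\varepsilon$), each cycle arm has depth $\approx 1$ behind an indistinguishable first edge of length $1-\varepsilon$, and the far cycle edge has arbitrarily large weight, so the cycle cannot be toured and the graph behaves like a claw with hidden round-trip costs $2,2,4$ (total $8=2\opt$). There the adversary can always arrange that the cost-$4$ job is done by an agent that has already committed $\approx 2$ to another branch, forcing $6-2\varepsilon$ against $\opt=4$. To repair your argument you need that kind of asymmetry --- a hidden \emph{long} tail together with a cycle that cannot be shortcut --- not merely ambiguity about which of three symmetric branches is the tail.
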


\begin{figure}[H]
    \begin{center}
    \begin{tikzpicture}[node distance={15mm}, thick, main/.style = {draw, circle}, scale=0.5] 
        \node[main, fill=gray!50, minimum size=0.5cm] (2) {$_{s}$ }; 
        \node[main, minimum size=0.5cm] (1) [above left of = 2] {$_{c_1}$ }; 
        \node[main, minimum size=0.5cm] (4) [left of = 1] {$_{c_2}$ }; 
        \node[main, minimum size=0.5cm] (3) [below left of = 2] {$_{c_3}$ };
        \node[main, minimum size=0.5cm] (5) [left of = 3] {$_{c_4}$ };  
        \node[main, minimum size=0.5cm] (6) [right of = 2] {$_{t_1}$ };
        \node[main, minimum size=0.5cm] (7) [right of = 6] {$_{t_2}$ };
        \draw (1) to node[above right] {$1-\varepsilon$} (2);
        \draw (2) to node[below right] {$1-\varepsilon$} (3);
        \draw (4) [dashed] to node[left] {$\infty$} (5);
        \draw (1) to node[above] {$\varepsilon$} (4);
        \draw (3) to node[below] {$\varepsilon$} (5);
        \draw (2) to node[above] {$1-\varepsilon$} (6);
        \draw (6) to node[above] {$1+\varepsilon$} (7);
    \end{tikzpicture} 
    \end{center}
    \caption[Lower Bound Energy Model]{Lower Bound construction for the energy model.}
    \label{fig:energybound}
\end{figure}
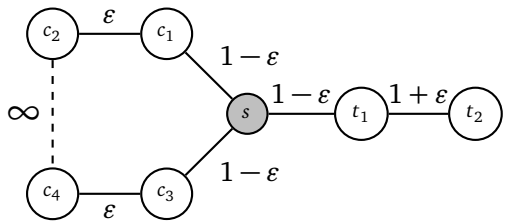

\begin{proof}
    Let $\varepsilon > 0$ be an arbitrary small and $\infty$ be an arbitrary large number.
    Consider the graph shown in Figure \ref{fig:energybound}. An optimal offline solution starting at $s$ would send one agent from $s$ to $t_2$ and back, while the other agent
    simultaneously explores the cycle, traversing each edge except $(c_2, c_4)$ twice, leading to a maximum distance of $4$ traversed by any agent.\\
    In the online case the agents cannot distinguish the edges they see when starting at $s$. Since $c_2$ and $c_4$ are hidden behind edges of length $\varepsilon$, the remaining two
    paths cannot be distinguished even if one agent has already traversed a path from $s$ to $c_2$ or $c_4$. Because of this an adversary could always affect the exploration
    order of the two agents, so that the maximum distance traversed by any agent is at least $6-2\varepsilon$. This leads to a competitive ratio of $1.5$ for arbitrary small $\varepsilon$.
\end{proof}

\subsection{Tadpole graphs with two agents}\label{tadpole_two}

We construct a $2.5$ competitive exploration strategy by using two agents for the exploration of a tadpole graph, where random paths are chosen as soon as the intersection is found.

    \begin{theorem}[Upper bound for two-agent randomized tadpole graph exploration]\label{TadpoleUpperTwo}
        Using the \textit{AMP} strategy with two agents for the exploration of tadpole graphs, choosing random paths as soon as the intersection is found,
        leads to a competitive upper bound of $2.5$ compared to an optimal offline strategy using two agents.
    \end{theorem}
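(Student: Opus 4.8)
The plan is to argue entirely in the time model: as recalled at the end of \S\ref{competitive}, a $c$-competitive time bound yields a $c$-competitive energy bound, so it suffices to show that the realized exploration time never exceeds $2.5\,\opt$, for every tadpole graph, every start $s$, and every outcome of the random choices. I would split according to the position of $s$ --- on the cycle, or on the tail --- and dispatch the degenerate starts $s=v_i$ and $s=v_t$ as short special cases (when $s=v_i$ the intersection is known at time $0$ and the cycle and the tail become two almost independent sub-tasks). Throughout I use the notation of Figure~\ref{Example}: $L_c, L_t$ for the cycle and tail lengths, $v_i$ for the intersection, $d_i=d(s,v_i)$, $e_{max}$ for the heaviest cycle edge, and $\tau_c\le L_c$ for the optimal two-agent time to explore the cycle alone (equal to $2d_\ell$ or $L_c$, by the analysis behind Theorem~\ref{ampCycles}).

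First I would record lower bounds on $\opt$. If $s$ is on the cycle, every walk from $s$ that visits $v_t$ must pass through $v_i$, so the offline agent covering $v_t$ traverses at least $2(d_i+L_t)$; and since the cycle must be explored in any case, $\opt \ge \max\{\tau_c,\ 2(d_i+L_t)\}$. If $s$ is on the tail, the analogous bounds are $\opt \ge 2d_t$ (covering $v_t$ and returning) together with $\opt \ge 2d_i$ (every agent touching the cycle must travel to it and back) and $\opt \ge \tau_c$. I would also use $\opt \ge l(e)$ for every edge $e$, in particular $\opt \ge l(e_{max})$.

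Next I would trace the online walk. Until $v_i$ is discovered, the algorithm is exactly the two-ended AMP sweep, so the explored region is always a single sub-path through $s$ and the AMP tie-break keeps the two traversed distances within one edge of each other; hence the time elapsed when $v_i$ is first reached is at most (roughly) twice the length of the swept arc, and smaller still if one agent was blocked by a heavy edge --- in all cases it is bounded by quantities for which $\opt$ also pays, via $\tau_c$ and $\opt\ge l(e_{max})$. From that instant there remain two branches to finish, the unexplored cycle arc and the tail, and ``choosing random paths'' commits one agent to each. In the favourable outcome the agent sitting at $v_i$ takes the tail and the other closes the arc; a term-by-term comparison of (time to find $v_i$) $+$ $2L_t$ $+$ (return) against $\max\{\tau_c, 2(d_i+L_t)\}$ then already gives a ratio strictly below $2$. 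The substantive case is the unfavourable outcome, in which the agent assigned the tail must first walk from its current position back to $v_i$: I would bound that detour by at most half the cycle, so the online time is at most (time to sweep the cycle) $+$ (half the cycle) $+$ $2L_t$ $+$ (return $\le$ half the cycle), and I would then compare this expression term by term with the lower bounds $\tau_c$ and $2(d_i+L_t)$ on $\opt$, using $d_s\le d_\ell$ and $\opt\ge l(e_{max})$ to absorb the heavy edge that AMP keeps avoiding. Making the constant of this accounting come out to exactly $2.5$ (and no larger) is the technical heart of the argument.

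The step I expect to be the main obstacle is this unfavourable case in the sub-regime where $v_i$ is discovered only after the agents have swept almost the whole cycle --- which is precisely what AMP does when $v_i$ hides just beyond $e_{max}$ --- while $L_t$ is simultaneously large. Then the online cost behaves like (a near-full loop of the cycle) $+$ (a detour back to $v_i$) $+$ $2L_t$ $+$ (return), and one must show that an adversary cannot inflate all these terms at once relative to $\opt$, given that $\opt$ already spends $\ge 2L_t$ on the tail and $\ge\tau_c$ (plus the cost of reaching $v_i$) on the cycle; balancing these competing contributions is the crux. A secondary, organizational difficulty is enumerating the positions of $v_i$ and $e_{max}$ relative to $s$ and to each other, along with the degenerate starts, so that every sub-configuration is covered.
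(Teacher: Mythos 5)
There is a genuine gap: your write-up is a plan whose decisive step is never carried out. You yourself flag the ``unfavourable'' outcome (the agent assigned the tail must detour, $v_i$ hides behind $e_{max}$, $L_t$ large) as ``the technical heart'' and ``the crux'', but you give no argument that the constant there is $2.5$ --- you only describe which quantities you would try to balance ($\tau_c$, $2(d_i+L_t)$, $l(e_{max})$). Since the whole theorem is exactly that quantitative bound, leaving this open means the statement is not proved. The paper's proof shows that no such delicate balancing is needed: it uses only the crude lower bound $\opt \geq 2\max(d_\ell, d_t)$ (with $d_i$ folded in when $s$ lies off the relevant branch), and then, for each of the finitely many configurations of starting position ($v_t$, $v_i$, tail node, cycle node) and random choice of path pair ($p_s,p_\ell$ / $p_s,p_t$ / $p_\ell,p_t$), writes the online time explicitly (e.g.\ $d_s+d_\ell+\max(d_s+2d_t,\,d_\ell)$, $2d_t+2d_\ell+d_s$, $3d_t+2d_s+d_i$, \dots) and checks each expression is at most $5\max(d_\ell,d_t)$. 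That mechanical accounting is the entire content; your finer lower bounds are not wrong, but they replace an easy comparison by a harder one that you then do not finish.

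A second, related misjudgment: you dismiss $s=v_i$ as a degenerate start where ``the cycle and the tail become two almost independent sub-tasks'', but with two agents and three outgoing branches this is precisely where the supremum $2.5$ is attained --- the random choice can commit both agents to $p_s$ and $p_\ell$, and the tail is only explored after one agent has returned to $s$ (the paper's Figure~\ref{Example_randomchoice}, giving online time $5$ against $\opt=2$). So this case cannot be waved away; it needs the same explicit bound $d_s+d_\ell+\max(d_s+2d_t,\,d_\ell)\leq 5\max(d_\ell,d_t)$ as the others, and it also shows that your hoped-for ``ratio strictly below $2$ in the favourable outcome'' cannot extend to a uniformly better constant. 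Minor further slips: with AMP only one agent moves at a time, so the time to discover $v_i$ equals the total swept length, not ``twice the length of the swept arc''; and your claim that the post-discovery detour is ``at most half the cycle'' is asserted, not argued, and depends on where AMP has stalled the second agent. None of these are fatal to the general strategy, but as written the proposal does not establish the $2.5$ bound.
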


\begin{proof} Consider how an online strategy exploring tadpole graphs can distinguish 3 cases for the start of an exploration: If the degree of the starting node $s$ is 1,
    it must be the end of the tail, if the degree is 3, it must be the intersection $v_i$. If starting node has degree 2, it can be any other node of the graph.
    The possible starting positions for an exploration are depicted in Figure \ref{Tail}.
    We describe the exploration strategies for each of these starting positions and show that for any of those, an online strategy choosing
    a random direction when finding the intersection, takes at most time $5\max(d(s,v), v\in \{v_\ell, v_t\})$ for the exploration of any tadpole graph. 
    Since any optimal strategy takes at least time $2\max(d(s,v), v\in \{v_\ell, v_t\})$ to explore any tadpole graph, this directly leads to a competitive
    upper bound of $2.5$. We then provide an example, in which our strategy takes $2.5$ times the optimal exploration time.

    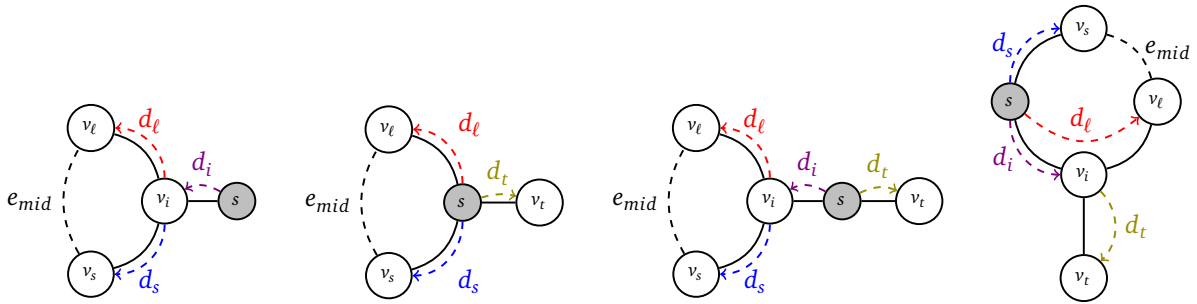
\begin{figure}[H]
        \begin{center}
        \adjustbox{max width=\textwidth}{
        \begin{tikzpicture}[node distance={15mm}, thick, main/.style = {draw, circle}, scale=0.3]
            \node[main, minimum size=0.5cm] (2) {$_{v_s}$ }; 
            \node[main, minimum size=0.5cm] (3) [above right of = 2] {$_{v_i}$ }; 
            \node[main, minimum size=0.5cm] (4) [above left of = 3] {$_{v_\ell}$ };
            \node[main, fill=gray!50, minimum size=0.5cm] (6) [right of = 3, right=-2em] {$_{s}$ };
            \draw (2) [bend left, dashed] to node[left] {$e_{mid}$} (4);
            \draw (2) [bend right] to node[right] {} (3);
            \draw (3) [bend right] to node[above right] {} (4);
            \draw (3) [] to node[above] {} (6);
            \path [red, thick, dashed, ->] (3) edge[out=90,in=0] node[above] {{$d_\ell$}} (4);
            \path [blue, thick, dashed, ->] (3) edge[out=270,in=0] node[below] {{$d_s$}} (2);
            \path [violet, thick, dashed, ->] (6) edge[out=150,in=30] node[above] {{$d_i$}} (3);
        \end{tikzpicture} 
        \hspace*{1em}
        \begin{tikzpicture}[node distance={15mm}, thick, main/.style = {draw, circle}, scale=0.3] 
            \node[main, minimum size=0.5cm] (2) {$_{v_s}$ }; 
            \node[main, fill=gray!50, minimum size=0.5cm] (3) [above right of = 2] {$_{s}$ }; 
            \node[main, minimum size=0.5cm] (4) [above left of = 3] {$_{v_\ell}$ };
            \node[main, minimum size=0.5cm] (6) [right of = 3, right=-2em] {$_{v_t}$ };
            \draw (2) [bend left, dashed] to node[left] {$e_{mid}$} (4);
            \draw (2) [bend right] to node[right] {} (3);
            \draw (3) [bend right] to node[above right] {} (4);
            \draw (3) [] to node[above] {} (6);
            \path [red, thick, dashed, ->] (3) edge[out=90,in=0] node[above right] {{$d_\ell$}} (4);
            \path [blue, thick, dashed, ->] (3) edge[out=270,in=0] node[below right] {{$d_s$}} (2);
            \path [olive, thick, dashed, ->] (3) edge[out=15,in=165] node[above] {{$d_t$}} (6);
        \end{tikzpicture} 
        \hspace*{1em}
        \begin{tikzpicture}[node distance={15mm}, thick, main/.style = {draw, circle}, scale=0.3] 
            \node[main, minimum size=0.5cm] (2) {$_{v_s}$ }; 
            \node[main, minimum size=0.5cm] (3) [above right of = 2] {$_{v_i}$ }; 
            \node[main, minimum size=0.5cm] (4) [above left of = 3] {$_{v_\ell}$ };
            \node[main, fill=gray!50, minimum size=0.5cm] (6) [right of = 3, right=-2em] {$_{s}$ };
            \node[main, minimum size=0.5cm] (5) [right of = 6, right=-2em] {$_{v_t}$ }; 
            \draw (2) [bend left, dashed] to node[left] {$e_{mid}$} (4);
            \draw (2) [bend right] to node[right] {} (3);
            \draw (3) [bend right] to node[above right] {} (4);
            \draw (3) [] to node[above] {} (6);
            \draw (6) [] to node[above] {} (5);
            \path [red, thick, dashed, ->] (3) edge[out=90,in=0] node[above] {{$d_\ell$}} (4);
            \path [blue, thick, dashed, ->] (3) edge[out=270,in=0] node[below] {{$d_s$}} (2);
            \path [violet, thick, dashed, ->] (6) edge[out=150,in=30] node[above] {{$d_i$}} (3);
            \path [olive, thick, dashed, ->] (6) edge[out=25,in=155] node[above] {{$d_t$}} (5);
        \end{tikzpicture} 
        \hspace*{1em}
        \begin{tikzpicture}[node distance={15mm}, thick, main/.style = {draw, circle}, scale=0.3] 
            \node[main, minimum size=0.5cm] (2) {$_{v_i}$}; 
            \node[main, fill=gray!50, minimum size=0.5cm] (5) [below left of = 4] {$_{s}$ }; 
            \node[main, minimum size=0.5cm] (3) [above right of = 2] {$_{v_\ell}$ }; 
            \node[main, minimum size=0.5cm] (4) [above left of = 3] {$_{v_s}$ };
            \node[main, minimum size=0.5cm] (6) [below of = 2] {$_{v_t}$ };
            \draw (2) [bend left] to node[left] {} (5);
            \draw (5) [bend left] to node[left] {} (4);
            \draw (2) [bend right] to node[right] {} (3);
            \draw (3) [bend right, dashed] to node[right] {$e_{mid}$} (4);
            \draw (2) [] to node[above] {} (6);
            \path [blue, thick, dashed, ->] (5) edge[out=90,in=180] node[left] {{$d_s$}} (4);
            \path [red, thick, dashed, ->] (5) edge[looseness=1.2,out=320,in=220] node[above] {{$d_\ell$}} (3);
            \path [olive, thick, dashed, ->] (2) edge[out=315,in=45] node[right] {{$d_t$}} (6);
            \path [violet, thick, dashed, ->] (5) edge[out=270,in=180] node[left] {{$d_i$}} (2);
        \end{tikzpicture} 
        }
        \end{center}
        \caption[Case1]{The possible starting positions in a tadpole graph.}
        \label{Tail}
    \end{figure}

\paragraph*{Starting at the end of the tail}

When starting at the end of the tail, both agents can be sent to the intersection, explore the cycle using an optimal $1.5$-competitive strategy
and return to $s$. This also leads to a $1.5$-competitive strategy. Since the paths taken to explore the cycle are identical to an optimal exploration,
this case is also $1$-competitive in the energy model.

\begin{align*}
    \opt &= 2d_t + \opt_C\\
    \onl &\leq 2d_t + 1.5\opt_C \leq 1.5\opt
\end{align*}

\paragraph*{Starting at the intersection}

When starting on the intersection, we randomly choose two outgoing edges and have each agent follow one path while applying the AMP strategy.
There are three possibilities for the paths chosen: $p_s$ and $p_\ell$, $p_s$ and $p_t$, or $p_\ell$ and $p_t$.\\
When $p_s$ and $p_\ell$ are chosen, which means exploring the cycle first, the agents explore the cycle and return. 
The first agent reaching $s$ again (which is the one that traversed $p_s$) will then explore the tail of the graph. 
This leads to a maximum exploration time of $d_s + d_\ell + \max(d_s+2d_t, d_\ell) \leq 5\max(d_\ell, d_t)$.\\

When $p_\ell$ and $p_t$ are chosen, one agent explores the tail, while another agent starts exploring the cycle. When the cycle is fully explored,
before the end of the tail is reached, the remaining agent then explores the rest of the tail and returns to $s$, leading to an exploration time less than
$3d_t \leq 5\max(d_\ell, d_t)$. If the end of the tail is reached before the cycle is fully explored, the agent exploring $p_t$ returns to $s$ and starts
exploring the remaining path $p_s$. When $d_t < d_\ell + l(e_{mid})$, the entire exploration time is at most $2d_t + 2d_\ell + d_s \leq 5\max(d_\ell, d_t)$.
If $d_t \geq d_\ell + l(e_{mid})$, the entire exploration time is at most $3d_t + 2d_s \leq 5\max(d_\ell, d_t)$. When choosing $p_s$ and $p_t$ we get 
exploration times of at most $3d_t$ (when $L \leq d_t$), $2d_t + 2d_\ell + d_s$ (when $d_t < d_s$), or $3d_t + 2d_\ell$ (when $d_t \geq d_s + l(e_{mid})$),
which are all less or equal to $5\max(d_\ell, d_t)$

\paragraph*{Starting on a node from the tail}

When starting on a node from the tail, one agent starts traversing in each direction, following the AMP strategy. When the intersection is found by
some agent, it randomly chooses an outgoing path. This leads to two possibilities: $p_t, p_\ell$ and $p_t, p_s$. In both cases, if $d_t \geq d_i + L$,
the exploration takes at most time $3d_l$. When $p_t$ and $p_\ell$ are chosen and $p_t \geq d_i + d_\ell + l(e_{mid})$, the exploration time is at most 
$3d_t + 2d_i + 2d_s$. If $p_t < d_i + d_\ell + l(e_{mid})$, the exploration takes at most time $2d_t + 3d_i + 2d_\ell + d_S$. When $p_t$ and $p_s$ are chosen
the exploration times are $3d_t + 2d_i + 2d_\ell$ (if $p_t \geq d_i + d_s + l(e_{mid})$) and $2d_t + 3d_i + 2d_\ell + d_S$ (if $p_t < d_i + d_s + l(e_{mid})$) respectively.

\paragraph*{Starting on a node from the cycle}

When starting on the cycle, there are two possibilities: $v_i$ can be located on either $p_s$ or $p_\ell$. In both cases the combination $p_\ell, p_s$ can be chosen,
which leads to maximum exploration times of $d_\ell + d_s + \max(d_s, d_\ell + 2d_t)$ (when $v_i$ is on $p_\ell$), or $d_\ell + d_s + \max(d_\ell, d_s + 2d_t)$ (when $v_i$ is on $p_s$).
We now cover the remaining cases: When $v_i$ is on $p_s$ and $p_\ell, p_t$ are chosen and $d_i + d_t \geq L - d_i$, meaning that the cycle is fully
explored before $v_t$ is reached, the exploration takes time at most $2(d_i + d_t)$. If $d_i + d_t < d_\ell + l(e_{mid})$, the exploration takes time $2d_t + 2d_\ell + d_s$.
If $d_i + d_t \geq d_\ell + l(e_{mid})$ the exploration takes time at most $3d_t + 2d_s + d_i$. When $v_i$ is on $p_\ell$ the exploration times are 
$2(d_i + d_t)$ (if $d_i + d_t \geq L - d_i$), $2d_t + 2d_\ell + d_s$ (if $d_i + d_t < d_s + l(e_{mid})$) and $3d_t + 2d_\ell + d_i$ (if $d_i + d_t \geq d_s + l(e_{mid})$) respectively.

\begin{figure}[H]
    \begin{center}
        \begin{tikzpicture}[node distance={15mm}, thick, main/.style = {draw, circle}] 
            \node[main, fill=gray!50, minimum size=0.5cm] (3) [] {$_{s}$ }; 
            \node[main, minimum size=0.5cm] (4) [left of = 3] {$_{v_m}$ };
            \node[main, minimum size=0.5cm] (6) [right of = 3] {$_{v_t}$ };
            \draw (3) [color=blue, dashed, bend right, out=270, in=270] to node[above] {$d_s = 1$} (4);
            \draw (3) [color=red, dashed, bend left, out=90, in=90] to node[below] {$d_\ell = 1$} (4);
            \draw (3) [color=purple, dashed] to node[above] {$d_t = 1$} (6);
        \end{tikzpicture} 
    \end{center}
    \caption[Tadpole example]{An example of a tadpole graph, where the competitive ratio can reach $2.5$. The dashed lines are paths with edges of length $\varepsilon$.
     If two agents randomly choose $d_s, d_\ell$ at the start of the exploration, the online exploration time reaches $5$, while the optimal exploration time with two agents is $2$.}
    \label{Example_randomchoice}
\end{figure}
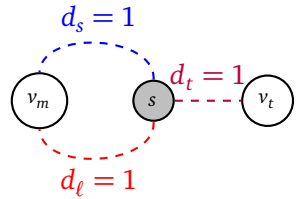

\paragraph*{Example in which a competitive ratio of exactly $2.5$ is reached}

Figure \ref{Example_randomchoice} depicts a graph in which the described exploration strategy reaches a competitive ratio of exactly $2.5$, if $p_s$ and $p_\ell$ are chosen at the
start of the exploration. In this case the cycle is explored, with the agents meeting at $v_{mid}$, before the agents backtrack to $s$ and then explore the tail of the graph. The exploration time 
is $5$. An optimal strategy would have sent one agent to $v_t$ and back, while the other agent explores the cycle, leading to an exploration time of $2$.
\end{proof}

\subsection{Tadpole Graphs with three Agents}\label{tadpole_three}
We modify the \textit{AMP} algorithm from \S\ref{cycles}, to construct a strategy for exploring tadpole graphs with $k=3$ agents. 
In the following, we first describe this strategy and then show that it achieves an optimal competitive ratio of $1$ for the energy model
as well as a competitive ratio of $2$ for the time~model.

\begin{theorem}[Three agent tadpole graph exploration]\label{TadpoleUpperThree}
    For $k=3$ agents there exists an exploration strategy for tadpole graphs, which has a competitive ratio of $2$ for the time model, and $1$ for the energy model.
\end{theorem}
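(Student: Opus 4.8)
The plan is to wrap the \emph{AMP} algorithm of \S\ref{cycles} with a third agent that is responsible for the tail: two agents always sweep the cycle exactly as in Algorithm \ref{alg:cap} (so they avoid $e_{mid}$ and finish at $v_s$ and $v_\ell$), and the remaining agent walks out and back along a shortest path through $v_i$ to $v_t$. The real work is to describe, for each of the four possible types of starting node (Figure \ref{Tail}), how to reach this configuration without knowing in advance which incident edges lie on the cycle.

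\emph{Reaching the AMP configuration.} If $\deg s=1$ then $s=v_t$, and we send all three agents toward $v_i$; once there, two run \emph{AMP} around the cycle and the third returns. If $\deg s=2$ and $s$ lies on the tail interior, two agents leave in the two directions: one dead-ends at $v_t$ and comes home (and is afterwards the tail agent, having already explored the tail), the other reaches $v_i$, where a third agent joins it to run \emph{AMP} on the cycle from $v_i$; for the time bound the two ``cycle agents'' move toward $v_i$ together from the start so the sweep can begin at time $d_i$. If $\deg s=2$ and $s$ lies on the cycle, two agents run \emph{AMP} at once while the third waits at $s$; as soon as \emph{every} cycle node has been visited (not merely as soon as $v_i$ is found, so that the whole cycle, including $l(e_{mid})$, is known), the reserve agent walks the now-known shortest path to $v_i$, down the tail, and back. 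If $\deg s=3$ then $s=v_i$: guess two of the three incident edges to be the cycle, run \emph{AMP} on them and send the third agent down the remaining edge; if the guess was wrong, a dead-end at $v_t$ is hit by an \emph{AMP} agent, and that agent then becomes the (already finished) tail agent while \emph{AMP} resumes on the two edges now known to be on the cycle. Throughout, any agent that is not one of the two current \emph{AMP} agents is never permitted to cross an edge that would violate the $d+l(\cdot)$ comparison of Algorithm \ref{alg:cap}, so the avoid-midpoint invariant of Theorem \ref{ampCycles} is preserved.

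\emph{Energy model.} I would argue that in each case the three agents traverse precisely the paths of an optimal offline solution. By the invariant of Theorem \ref{ampCycles} (with $L_c$ in place of $L$), the two cycle agents visit $v_\ell$ and $v_s$ without ever crossing $e_{mid}$, realizing round trips of length $2\,d(s,v_\ell)$ and $2\,d(s,v_s)$; the reserve agent, which in the energy model may freely wait until the whole graph is known, then realizes the round trip of length $2\,d(s,v_t)$ along shortest paths. Conversely, in \emph{any} solution some agent must reach $v_\ell$ and some agent must reach $v_t$, so $\opt\ge 2\max\{d(s,v_\ell),d(s,v_t)\}$, and since $d(s,v_s)\le d(s,v_\ell)$ the online cost equals this lower bound. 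Hence the ratio is $1$.

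\emph{Time model and the main difficulty.} Here I would bound the two contributions separately. The \emph{AMP} sweep visits every cycle node in time at most $d_s+d_\ell$ and brings both cycle agents home by time at most $d_s+2d_\ell\le 3d_\ell\le\frac{3}{2}\opt$, using $\opt\ge 2\,d(s,v_\ell)$. The reserve (tail) agent can start no later than the instant the cycle is fully explored, i.e.\ by time at most $d_s+d_\ell$, after which it spends exactly $2\,d(s,v_t)$, finishing by $d_s+d_\ell+2\,d(s,v_t)\le 2\,d(s,v_\ell)+2\,d(s,v_t)\le\opt+\opt=2\opt$; taking the maximum gives the ratio $2$. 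The main obstacle is making this accounting uniform across the four starting-node cases: in the tail-interior case the cycle sweep only begins after a walk of length $d_i$ to $v_i$, which forces the ``both cycle agents move toward $v_i$ together'' refinement and a slightly more careful estimate (of the flavour $2d_i+3d_\ell\le 2\opt$ via $\opt\ge 2(d_i+d_\ell)$), and in the $s=v_i$ case the cost of a wrong initial guess together with the re-labelling of agents must be absorbed into the same $2\opt$ budget, which is where the bookkeeping is heaviest. Finally, checking a worst-case example confirms that the factor $2$ cannot be improved for this strategy.
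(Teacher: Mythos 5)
Your high-level plan (AMP on the cycle, a third agent responsible for the tail, energy-optimality because every agent ends up walking a shortest round trip) is the same as the paper's, but the strategy you describe is not a well-defined online algorithm in the degree-$2$ case, and that is exactly where the theorem's difficulty sits. At a degree-$2$ start the agents cannot tell whether $s$ lies on the cycle or in the interior of the tail, nor which of the two incident edges leads towards $v_i$; yet you prescribe different behaviour from time $0$ in these indistinguishable situations: in the on-cycle sub-case the reserve agent waits at $s$ until the cycle is closed, while in the tail-interior sub-case ``the two cycle agents move toward $v_i$ together from the start so the sweep can begin at time $d_i$''. That presupposes knowing both that $s$ is on the tail and in which direction $v_i$ lies. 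If the accompanying agent picks the wrong direction it walks to the dead end $v_t$ and back before it can reach $v_i$, so its total traversal is at least $2d_t+2(d_i+d_s)$, which exceeds the offline energy $2\max\bigl(d_t,\,d_i+d_\ell\bigr)$ (take $d_t=d_i+d_\ell$ and $d_s>0$); the claimed ratio $1$ in the energy model is lost, and the premise ``the sweep begins at time $d_i$'' underlying your time accounting fails as well. Falling back to ``in the energy model the reserve may freely wait'' does not rescue this, because the theorem asks for a single strategy that is simultaneously $2$-competitive in time and $1$-competitive in energy. The same kind of hidden-information bookkeeping is what you defer in the $s=v_i$ case (the wrong-guess relabelling), so the two hardest cases are left open.

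The paper's proof avoids the problem with one uniform rule that never branches on unavailable information: at any degree-$2$ start, two agents explore in both directions under the one-agent-at-a-time minimum-total-distance rule while the third waits at $s$; the moment $v_i$ is discovered (necessarily via its shortest path, of length $d_i$), the third agent walks that known path to $v_i$ while the others pause, and then all three continue under the same one-at-a-time rule before returning in parallel; at $s=v_i$ one agent is simply sent per direction under the same rule, with no guessing. Energy-optimality then follows as in Theorem~\ref{ampCycles}, and the time bound needs none of your per-agent parallel accounting: since at most one agent moves until all nodes are seen, the exploration time is the sum of the three outward distances plus the parallel return, e.g.\ $(d_i+d_s)+(d_i+d_\ell)+d_t+\max(d_t,\,d_i+d_\ell)\le 4\max(d_t,\,d_i+d_\ell)=2\opt$ when $s$ is on the tail, and analogously in the other cases. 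Your estimates for the cases that are well defined (start at $v_t$; start on the cycle with a waiting reserve) are fine, but the tail-interior and intersection cases need this sequential scheduling, or some other repair that does not assume knowledge of where $v_i$ is, for the stated ratios to hold.
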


\begin{proof}
Similar to the proof of Thm.\ref{TadpoleUpperTwo} we show that the competitive ratios stated above hold for each possible starting position
in a tadpole graph when using three agents. Since a strategy with two agents already optimally explores the graph when starting on $v_t$,
we can focus on the remaining cases: $s = v_i$ and $s$ is a node with degree two.

\paragraph*{Starting on the intersection} 
When the starting node $s$ is the intersection, an exploration strategy can send one agent in each direction. Like in the \textit{AMP} strategy for cycles, only one
agent traverses an edge at a time, minimizing the maximum distance each agent has traversed from $s$ to its current position. Because of this, the cycle is explored
optimally for the energy model, with the agents following exactly the same paths as in the \textit{AMP} algorithm for cycles. The tail is also optimally explored for
the energy model, since the agent exploring the tail can only follow the same path as an agent would in an offline strategy, leading to a competitive ratio of $1$ for the energy model.
For the time model, we reach a competitive ratio of $2$, because the strategy takes time $d_t + d_\ell + d_s + \max(d_t,d_\ell) \leq 4\max(d_t,d_\ell)$, since only one agent moves
at a time until the graph is explored, before all agents return to $s$ in parallel.

\paragraph*{Starting on a node with degree two}
When starting on a node with degree two, one agent waits at the node $s$ until the intersection is found, while the other two agents use the \textit{AMP} strategy to
explore the graph in both directions. As soon as the intersection $v_i$ is discovered, the third agent moves to the intersection, while the other agents wait. When the third agent has reached the intersection,
all three agents explore the graph in different directions, following the \textit{AMP} strategy. Since the agent discovering the intersection takes the shortest path with length $d_i$, 
and the agents proceed as in case 2, the strategy still reaches a competitive ratio of $1$ for the energy model, as well as $2$ for the time model. When the agents start on the cycle, they
traverse a distance of $d_s + d_\ell + (d_i + d_t) + \max(d_i + d_t, d_\ell) \leq 4\max(d_i + d_t, d_\ell)$. When the agents start on the tail, they traverse a distance of 
$(d_i + d_s) + (d_i + d_\ell) + d_t + \max(d_t, d_i + d_\ell) \leq 4\max(d_t, d_i + d_\ell)$.
\end{proof}

\subsection{Tadpole Graphs with four Agents}\label{tadpole_four}

When using four agents to explore a tadpole graph, we can achieve a $1.5$-competitive exploration strategy for the time model,
as well as a $1$-competitive strategy for the energy model. Since the optimal offline strategy cannot yield a better result than $2\max(d(s,v)), v\in V$,
we can still use the strategy given in \S\ref{tadpole_three} to reach $1$-competitiveness for the energy model. Also, since the lower bound for the exploration of 
tadpole graphs is $1.5$ as shown in \S\ref{lower_bound_tad}, using more than four agents for the exploration cannot yield a better competitive ratio than this strategy.

\begin{theorem}[Exploring tadpole graphs with four agents]\label{TadpoleUpperFour}
    Using $k=4$ agents, a competitive ratio of $1.5$ can be achieved on tadpole graphs in the time model.
\end{theorem}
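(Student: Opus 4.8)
As in the proofs of Theorems~\ref{TadpoleUpperTwo} and~\ref{TadpoleUpperThree}, I would fix an explicit strategy for the four agents and bound its cost separately for each possible starting position (the cases of Figure~\ref{Tail}), using throughout the lower bound already isolated in the proof of Theorem~\ref{TadpoleUpperTwo}: every offline solution needs time at least $2M$, where $M := \max\bigl(d(s,v_\ell),\,d(s,v_t)\bigr)$. So it suffices to guarantee $\onl \le 3M$ in each case. The one external ingredient is Theorem~\ref{ampCycles}: two agents running Algorithm~\ref{alg:cap} on a cycle visit every node after at most $d_s+d_\ell$ time and are back at their common start after at most $d_s+2d_\ell \le 3d_\ell$ time, never traversing the midpoint edge $e_{mid}$. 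The point of the fourth agent is that it lets this $1.5$-competitive cycle exploration run \emph{in parallel} with a tour of the tail (which by itself costs $2d_t \le 3M$), once $v_i$ has been located and a second agent delivered to it.

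Two starting positions are immediate. If $s=v_t$, two agents walk the unique path to $v_i$ (time $d_t$), run Algorithm~\ref{alg:cap} on the cycle, and walk back, for a total of $2d_t + 1.5\,\opt_C \le 1.5(2d_t+\opt_C) = 1.5\,\opt$, the other two idle. If $s=v_i$, two agents run Algorithm~\ref{alg:cap} on two of the three edges, a third walks the remaining edge to $v_t$ and back, and the fourth is a full-speed scout that resolves which edge is really the tail; if the pair guessed wrong one of them dead-ends and the agents re-assign roles as in the degree-two case below. The undisturbed run costs $\max(3d_\ell,2d_t)\le 3M$.

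For $\deg(s)=2$, where $v_i$ is unknown, I would run the following. Agents $a_1,a_2$ leave $s$ in the two directions under the \textit{AMP} discipline (always advance the agent with smaller accumulated distance), so they provably never enter $e_{mid}$ prematurely; agents $a_3,a_4$ leave $s$ in the two directions as full-speed scouts. A scout hitting a degree-$1$ node has found $v_t$ (so $s$ is on the tail); a scout hitting a degree-$3$ node has found $v_i$, and the first such discovery happens at time $d_i$. From that moment: if $s$ is on the cycle, the discovering scout descends the tail $v_i\to v_t\to v_i\to s$ (finishing by $2(d_i+d_t)\le 2M$), the other scout walks back (by $2d_i\le 2M$), and $a_1,a_2$ just finish their run of Algorithm~\ref{alg:cap} on the cycle (by $3d_\ell\le 3M$); if $s$ is on the tail, the discovering scout waits at $v_i$ for the \textit{AMP} agent from the same side (which arrives by time $d_i+\min(d_i,d_t)$), the two then run Algorithm~\ref{alg:cap} on the two cycle edges at $v_i$ and return to $s$ (finishing by $2d_i+\min(d_i,d_t)+d_s+2d_\ell \le 3(d_i+d_\ell)\le 3M$), while the agents that went down the dead-end branch walk back (by $3d_t\le 3M$, using the elementary split $d_i\le d_t$ vs.\ $d_t\le d_i$). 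In every sub-case $\onl \le 3M = 1.5\,\opt$; since Theorem~\ref{TadpoleTimeLower} gives the matching lower bound $1.5$, this completes the proof.

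The main obstacle is making the ``learning'' part and the ``execution'' part fit inside the $3M$ budget \emph{simultaneously} precisely when it is tight: when the dead-end branch (the tail, or the tail edge at $v_i$) is short, $\opt$ is small, so there is almost no slack to locate $v_i$ — and none to spare if a full-speed scout walked deep into an edge that only turns out afterward to be the enormous $e_{mid}$. The way around this is exactly the coupling above: the \textit{AMP}-discipline agents never waste effort on $e_{mid}$ and so certify the $3d_\ell$ term, while the scouts are only ever charged for tours of length $O(d_i)$ or $O(d_t)$, which $M$ dominates, and the scout that reaches $v_i$ is halted there. Pushing this coupling through the sub-cases — $s$ on the cycle versus the tail, which direction is the dead end, the order of $d_i,d_t,d_\ell$, and the analogous role re-assignment when $s=v_i$, together with checking that an \textit{AMP} run resumed from a partially explored state still costs at most $3d_\ell$ — is where the bookkeeping lies, but all of it reduces to the elementary inequalities sketched above.
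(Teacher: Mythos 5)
Your overall accounting plan is the same as the paper's: since any offline solution needs at least $2\max_{v\in V} d(s,v)$ and backtracking costs at most $\max_{v\in V} d(s,v)$, it suffices to visit every node within $2\max_{v\in V} d(s,v)$. The concrete strategy you propose, however, differs from the paper's (an \textit{AMP} pair plus two full-speed scouts, versus the paper's two pairs moving under the rule ``all agents except the one with the largest prospective traversed distance advance'', with a pair splitting when it reaches $v_i$), and it has a genuine gap exactly at the point you defer to ``bookkeeping'': your prescriptions use information the online algorithm does not have at decision time. When the first scout reaches the degree-$3$ node, it cannot tell which of the two unexplored incident branches is the tail, nor whether $s$ lies on the cycle or on the tail (if $d_i<d_t$, the other direction has revealed nothing decisive yet). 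Yet your two sub-cases prescribe different actions at that very moment: ``the discovering scout descends the tail'' (requires knowing the tail branch) versus ``the discovering scout waits at $v_i$ for the AMP agent from the same side'' (requires knowing $s$ is on the tail). Similarly, for $s=v_i$ the third agent ``walks the remaining edge to $v_t$ and back'' presupposes the tail has been identified; if the guess is wrong, ``re-assign roles as in the degree-two case'' is not an analysis, and this is precisely where the ratio can break: a wrongly assigned full-speed walker can be lured across the arbitrarily long $e_{mid}$ (whose length is unrelated to $M=\max(d_\ell,d_t)$), or the tail exploration can be delayed while the agents discover their guess was wrong, and your claimed bounds ($3d_\ell\le 3M$ for the undisturbed AMP run, $2(d_i+d_t)\le 2M$ for the scout's tail tour) no longer apply to the paths actually walked. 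Also note that the \textit{AMP} pair's $3d_\ell$ bound silently assumes that when $a_1$ reaches $v_i$ it continues along the cycle, which again hinges on the tail having already been taken by the scout, i.e.\ on the same unavailable identification.

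The paper resolves exactly this difficulty without any guessing: from a degree-$2$ start it sends two \emph{pairs} (each pair moving as one), and the pair that reaches $v_i$ splits there, so both branches are covered regardless of which is the tail; from $v_i$ it sends one agent per branch; and the movement rule ``everyone advances except the agent whose prospective total distance is largest'' yields the uniform invariant that any agent waits at most as long as the distance it still has to cover, hence every node $v$ is visited within $2d(s,v)$, across all branches simultaneously. Your scout-based scheme can probably be repaired by adopting the same idea (e.g.\ the scout always halts at $v_i$ and splits with its same-side partner on arrival, never committing to a branch identity), but then one must redo the cycle analysis for a run of Algorithm~\ref{alg:cap} resumed from a partially explored state and re-verify the $3M$ budget in the wrong-guess configurations for $s=v_i$; as written, these cases are the content of the theorem and are missing, so the proof is incomplete.
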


\begin{proof}
    Since the optimal offline strategy needs $2\max(d(s,v), v\in V)$ time to explore a tadpole graph, and our strategy needs time $\max(d(s,v), v\in V)$ to backtrack after
    the last node has been visited, we can show that our online strategy needs at most time $2\max(d(s,v), v\in V)$ until all nodes of the graph have been visited, to achieve a competitive
    ratio of $1.5$ for the entire exploration.\\
    When the agents start the exploration on the intersection, three agents are sent into the graph, one in each possible direction. If they start on a node with degree two, pairs of 
    two agents (behaving like a single agent, making all steps in parallel) are sent to both directions. As soon as the intersection is found by a pair of agents, it splits, so that at 
    least one agent is exploring each path.\\
    The agents start the exploration by comparing the entire traversed distance after adding the next edge, like in the AMP strategy, but in this case all agents but the
    one with the longest total traversed distance are moving. When two agents would have traversed the same distance, both agents move. Whenever some agent reaches a node, if another agent is currently waiting, they compare the distances after the next edge
    traversals and the one seeing a shorter distance immediately starts traversing the next edge, so that at most one agent is waiting at any point in time.\\
    Since the agents on all other currently visible paths are moving forward whenever some agent $a_k$ waits, the maximum time needed until it traverses a distance of $d_k$ is $(d_k + t) \leq 2d_k$,
    where $t$ is the time waited. This is the case because as soon as $a_k$ has waited for a time $t=d_k$, the other agents must have either traversed a distance of $d_k$, 
    or already finished their exploration, which leads to $a_k$ starting the traversal until it reaches the same distance without stopping.\\
    Since the agents are exploring the graph on all possible paths, each node $v\in V$ is reached by some agent after at most time $2d(s, v)$ has passed. Because of this the last visited node 
    is reached by some agent after at most time $2\max(d(s,v), v\in V)$ has passed.
\end{proof}

\section{Discussion} \label{discussion}

We have shown that two agents suffice to optimally explore cycles for the energy model by modifying the \textit{ALE} strategy by Higashikawa et al.\ \cite{Higashikawa2012}
to create the \textit{AMP} strategy, which we also used to achieve optimal online exploration strategies on tadpole graphs with 3 agents for the energy and 4 agents for the time model.
An interesting result for the exploration of tadpole graphs with multiple agents is, that while on cycles online multi-agent strategies with an optimal competitive ratio of $1.5$ 
only achieve a worse competitive ratio than an optimal single-agent exploration strategy,
which reaches a competitive ratio of $\frac{1 + \sqrt{3}}{2} \simeq 1.366$ \cite{MIYAZAKI2009}, better competitive ratios
than in the single-agent case can be achieved on tadpole graphs, with an optimal competitive ratio of $1.5$ with four agents, versus $2$ with a single agent \cite{Brandt2020}. 
In the Appendix we sketch how the exploration strategies on tadpole graphs can be extended onto a graph class we call $n$-tadpole graphs, which are generalized variants of tadpole graphs, where $n$ tails are attached to a cycle instead of only one, with cycles being considered as $0$-tadpole graphs.

\paragraph*{Outlook}
We investigated cycles and tadpole graphs, and sketched extensions of those strategies for $n$-tadpole graphs
in the Appendix. 
These graph classes are all subclasses of unicyclic graphs.
Thus, one possibility for further research could be analyzing how exploration strategies for multi-agent exploration of unicyclic graphs can be achieved. But there are still other graph classes, which could also be considered for multi-agent graph exploration,
like directed graphs or cactus graphs.\\
Moreover, as seen in the results, there is still a gap between the upper and lower bounds in some scenarios.
Hence, another possibility for further research could be proving better lower bounds for tadpole graphs, or finding strategies matching the lower bound with only two or three~agents.\\
Lastly, we considered global communication between all agents, which allowed the identification
of the cycle edge containing the midpoint of the cycle. The strategies presented in this work do
not work if agents cannot communicate their already traversed distance. Thus, an open question is, how the competitive
ratio for the exploration of cycles and tadpole graphs is affected when restricting the communication between the agents.
\newpage
\bibliographystyle{ieeetr} 
\bibliography{literature} 
\newpage
\appendix
\section{Appendix}\label{appendix}
We now provide the deferred proof of Theorem~\ref{AleUpper} in the following Section \S\ref{proofAleUpper} and the extension of our strategies to $n$-tadpole graphs in the subsequent Section \S\ref{ntadpoles}.
For this proof, we consider $e_{max} = (v_s, v_\ell)$, as shown in Figure \ref{Example2}.
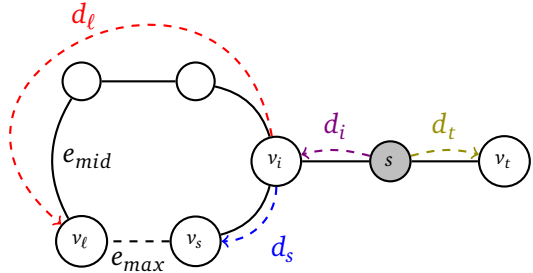
\begin{figure}[H]
    \begin{center}
        \begin{tikzpicture}[node distance={15mm}, thick, main/.style = {draw, circle}, scale=0.5] 
            \node[main, minimum size=0.5cm] (2) {$_{v_s}$ }; 
            \node[main, minimum size=0.5cm] (3) [above right of = 2] {$_{v_i}$ }; 
            \node[main, minimum size=0.5cm] (4) [above left of = 3] {};
            \node[main, fill=gray!50, minimum size=0.5cm] (6) [right of = 3] {$_{s}$ };
            \node[main, minimum size=0.5cm] (7) [left of = 2] {$_{v_\ell}$ };
            \node[main, minimum size=0.5cm] (8) [left of = 4] {};
            \node[main, minimum size=0.5cm] (9) [right of = 6] {$_{v_t}$ };
            \draw (7) [bend left] to node[right] {$e_{mid}$} (8);
            \draw (2) [bend right] to node[right] {} (3);
            \draw (3) [bend right] to node[above right] {} (4);
            \draw (3) [] to node[above] {} (6);
            \draw (2) [dashed] to node[below] {$e_{max}$} (7);
            \draw (4) [] to node[above] {} (8);
            \path [red, thick, dashed, ->] (3) edge[looseness=2.6, bend angle=90, out=100,in=140] node[above] {{$d_\ell$}} (7);
            \path [blue, thick, dashed, ->] (3) edge[out=270,in=0] node[below right] {{$d_s$}} (2);
            \path [olive, thick, dashed, ->] (6) edge[out=15,in=165] node[above] {{$d_t$}} (9);
            \path [violet, thick, dashed, ->] (6) edge[out=165,in=15] node[above] {{$d_i$}} (3);
            \draw (6) [] to node[above] {} (9);
        \end{tikzpicture} 
    \end{center}
    \caption[Tadpole example]{Tadpole Graph Example. The paths $p_\ell, p_s, p_i, p_t$ and the edges $e_{mid}$ and $e_{max}$ are marked.}
    \label{Example2}
\end{figure}

\subsection{Proof of Thm.~\ref{AleUpper}}\label{proofAleUpper}
\begin{proof} Consider the possible starting positions of the agents in a tadpole graph. We show that for any starting position a team of three agents will
    reach a competitive ratio of at most $3$, while a team of four agents reaches a competitive ratio of $2$.

\paragraph*{Starting at the end of the tail}
In this case two agents suffice to reach a competitive ratio of $1.5$. We consider $\opt_C$ to be the time needed for the exploration of the cycle by an optimal strategy.
We already know that the \textit{ALE} strategy needs at most time $1.5\opt_C$ for the exploration of a cycle. Since the exploration starts on $v_t$, an online strategy can send both agents
to $v_i$, let them explore the cycle using the \textit{ALE} strategy, and have them return to $v_t$. This leads to a competitive ratio of at most $1.5$.

\begin{align*}
    \opt &= 2d_t + \opt_C\\
    \onl &\leq 2d_t + 1.5\opt_C \leq 1.5\opt
\end{align*}

\begin{figure}[H]
    \begin{center}
        \begin{tikzpicture}[node distance={15mm}, thick, main/.style = {draw, circle}, scale=0.5] 
            \node[main, minimum size=0.5cm] (2) {$_{v_s}$ }; 
            \node[main, fill=gray!50, minimum size=0.5cm] (3) [above right of = 2] {$_{s}$ }; 
            \node[main, minimum size=0.5cm] (4) [above left of = 3] {$_{v_\ell}$ };
            \node[main, minimum size=0.5cm] (6) [right of = 3] {$_{v_t}$ };
            \draw (2) [bend left, dashed] to node[left] {$e_{max} = e_{mid}$} (4);
            \draw (2) [bend right] to node[right] {} (3);
            \draw (3) [bend right] to node[above right] {} (4);
            \draw (3) [] to node[above] {} (6);
            \path [red, thick, dashed, ->] (3) edge[out=90,in=0] node[above right] {{$d_\ell$}} (4);
            \path [blue, thick, dashed, ->] (3) edge[out=270,in=0] node[below right] {{$d_s$}} (2);
            \path [olive, thick, dashed, ->] (3) edge[out=15,in=165] node[above] {{$d_t$}} (6);
        \end{tikzpicture} 
        \hspace*{2em}
        \begin{tikzpicture}[node distance={15mm}, thick, main/.style = {draw, circle}, scale=0.5] 
            \node[main, minimum size=0.5cm] (2) {$_{v_s}$ }; 
            \node[main, fill=gray!50, minimum size=0.5cm] (3) [above right of = 2] {$_{s}$ }; 
            \node[main, minimum size=0.5cm] (4) [above left of = 3] {};
            \node[main, minimum size=0.5cm] (6) [right of = 3] {$_{v_t}$ };
            \node[main, minimum size=0.5cm] (7) [left of = 2] {$_{v_\ell}$ };
            \node[main, minimum size=0.5cm] (8) [left of = 4] {};
            \draw (7) [bend left] to node[right] {$e_{mid}$} (8);
            \draw (2) [bend right] to node[right] {} (3);
            \draw (3) [bend right] to node[above right] {} (4);
            \draw (3) [] to node[above] {} (6);
            \draw (2) [dashed] to node[below] {$e_{max}$} (7);
            \draw (4) [] to node[above] {} (8);
            \path [red, thick, dashed, ->] (3) edge[looseness=2.6, bend angle=90, out=100,in=140] node[above] {{$d_\ell$}} (7);
            \path [blue, thick, dashed, ->] (3) edge[out=270,in=0] node[below right] {{$d_s$}} (2);
            \path [olive, thick, dashed, ->] (3) edge[out=15,in=165] node[above] {{$d_t$}} (6);
        \end{tikzpicture} 
        \caption[ALE at intersection]{The two possible cases when the agents start at the intersection of the tadpole graph.\linebreak Left: $e_{mid}=e_{max}$, Right: $e_{mid}\neq e_{max}$.}
    \end{center}
    \label{Case2ALE}
\end{figure}
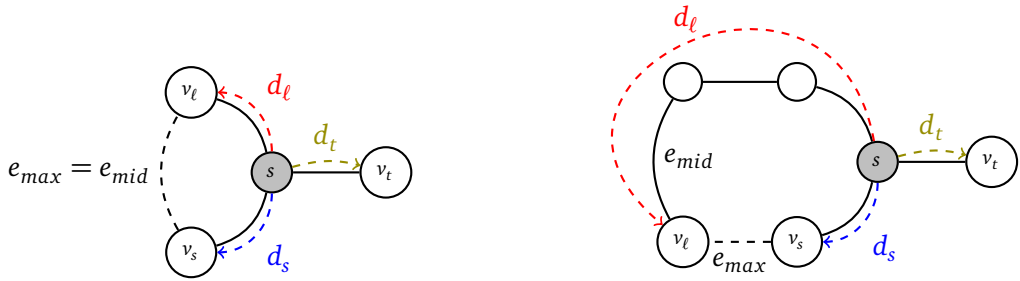

\paragraph*{Starting at the intersection}
In this case we can use three agents to reach a competitive ratio of $2$. This is achieved by sending one agent in each direction, where only the agent seeing the shortest
edge traverses this edge at any given time. If all three agents see edges of the same length the agent that will traverse the next edge is picked at random. If two agents see
the same edge all nodes on the cycle have been visited, the third agent must be located on the tail and completes the exploration of the tail. Since only one agent moves at any
given time, the agents exploring the cycle will always stop at both sides of one of the edges with the highest length in the cycle, which will then be considered to be $e_{max}$.\\
We denote the maximum distance traversed by an agent exploring the cycle in the online strategy until all nodes have been visited as $opt_\ell$. As shown in Figure \ref{Case2ALE}, we consider two cases: 
$e_{max} = e_{mid}$ and $e_{max} \neq e_{mid}$. In the first case, we know that $d_\ell = \opt_\ell \leq L/2$. This leads to a competitive ratio of at most~$2$. 

\begin{align*}
    \small
    \opt &= 2\max(d_t, d_\ell)\\
    \onl &= d_t + d_s + d_\ell + \max(d_t, d_\ell)\\
    &\leq 4\max(d_t, d_\ell) = 2\opt\\
\end{align*}

In the second case, where $e_{max} \neq e_{mid}$, we know that $d_s + l(e_{max}) \leq opt_\ell \leq L/2 \leq d_\ell$, since $p_\ell$ must include $e_{mid}$ to be the longer path to a neighbor of $e_{max}$.
Because of this, in the online case, when all nodes have been visited, the shortest distance to $s$ is $d_s + l(e_{max})$. We also know that $d_l + d_s = L - l(e_{max}) \leq L - l(e_{mid}) \leq 2\opt_\ell$.
Because of this, we also achieve a competitive ratio of $2$ in this case.

\begin{align*}
    \small
    \opt &= 2\max(d_t, \opt_\ell)\\
    \onl &= d_t + d_s + d_\ell + \max(d_t, d_s + l(e_{max}))\\ 
    \text{Case 1: } d_s + l(e_{max}) \leq \opt_\ell \leq d_t&:\\
    \opt &= 2d_t\\
    \onl &= d_s+d_\ell+2d_t \leq 2\opt_\ell + 2d_t\\ 
    &\leq 4d_t = 2\opt\\
    \text{Case 2: } d_s + l(e_{max}) \leq d_t \leq \opt_\ell&:\\
    \opt &= 2\opt_\ell\\
    \onl &= d_s + d_\ell + 2d_t \leq 2\opt_\ell + 2d_t\\
    &\leq 4\opt_\ell = 2\opt\\
    \text{Case 3: } d_t \leq d_s + l(e_{max}) \leq \opt_\ell&:\\
    \opt &= 2\opt_\ell\\
    \onl &= 2d_s + d_\ell + l(e_{max}) + d_t \leq d_s + d_\ell + 2\opt_\ell\\
    &\leq 4\opt_\ell = 2\opt
\end{align*}

\begin{figure}[H]
    \begin{center}
        \begin{tikzpicture}[node distance={15mm}, thick, main/.style = {draw, circle}, scale=0.5] 
            \node[main, minimum size=0.5cm] (2) {$_{v_s}$ }; 
            \node[main, minimum size=0.5cm] (3) [above right of = 2] {$_{v_i}$ }; 
            \node[main, minimum size=0.5cm] (4) [above left of = 3] {$_{v_\ell}$ };
            \node[main, fill=gray!50, minimum size=0.5cm] (6) [right of = 3] {$_{s}$ };
            \node[main, minimum size=0.5cm] (7) [right of = 6] {$_{v_t}$ };
            \draw (2) [bend left, dashed] to node[left] {$e_{max} = e_{mid}$} (4);
            \draw (2) [bend right] to node[right] {} (3);
            \draw (3) [bend right] to node[above right] {} (4);
            \draw (3) [] to node[above] {} (6);
            \draw (6) [] to node[above] {} (7);
            \path [red, thick, dashed, ->] (3) edge[out=90,in=0] node[above right] {{$d_\ell$}} (4);
            \path [blue, thick, dashed, ->] (3) edge[out=270,in=0] node[below right] {{$d_s$}} (2);
            \path [olive, thick, dashed, ->] (6) edge[out=15,in=165] node[above] {{$d_t$}} (7);
            \path [violet, thick, dashed, ->] (6) edge[out=165,in=15] node[above] {{$d_i$}} (3);
        \end{tikzpicture} 
        \begin{tikzpicture}[node distance={15mm}, thick, main/.style = {draw, circle}, scale=0.5] 
            \node[main, minimum size=0.5cm] (2) {$_{v_s}$ }; 
            \node[main, minimum size=0.5cm] (3) [above right of = 2] {$_{v_i}$ }; 
            \node[main, minimum size=0.5cm] (4) [above left of = 3] {};
            \node[main, fill=gray!50, minimum size=0.5cm] (6) [right of = 3] {$_{s}$ };
            \node[main, minimum size=0.5cm] (7) [left of = 2] {$_{v_\ell}$ };
            \node[main, minimum size=0.5cm] (8) [left of = 4] {};
            \node[main, minimum size=0.5cm] (9) [right of = 6] {$_{v_t}$ };
            \draw (7) [bend left] to node[right] {$e_{mid}$} (8);
            \draw (2) [bend right] to node[right] {} (3);
            \draw (3) [bend right] to node[above right] {} (4);
            \draw (3) [] to node[above] {} (6);
            \draw (2) [dashed] to node[below] {$e_{max}$} (7);
            \draw (4) [] to node[above] {} (8);
            \path [red, thick, dashed, ->] (3) edge[looseness=2.6, bend angle=90, out=100,in=140] node[above] {{$d_\ell$}} (7);
            \path [blue, thick, dashed, ->] (3) edge[out=270,in=0] node[below right] {{$d_s$}} (2);
            \path [olive, thick, dashed, ->] (6) edge[out=15,in=165] node[above] {{$d_t$}} (9);
            \path [violet, thick, dashed, ->] (6) edge[out=165,in=15] node[above] {{$d_i$}} (3);
            \draw (6) [] to node[above] {} (9);
        \end{tikzpicture}
        \caption[ALE at intersection]{The two possible cases when the agents start on the tail of the tadpole graph.\linebreak Left: $e_{mid}=e_{max}$, Right: $e_{mid}\neq e_{max}$.} 
    \end{center}
    \label{Case3ALE}
\end{figure}
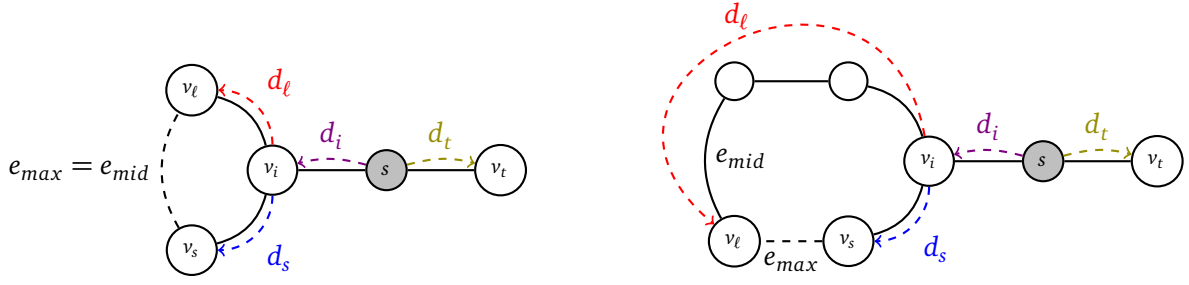

\paragraph*{Starting on a node from the tail}
In this case, which is shown in Figure \ref{Case3ALE}, we use three agents to reach a competitive ratio of $2$. At the beginning of the exploration,
one agent is sent in each of the two visible directions, following the \textit{ALE} strategy, while the third agents waits at $s$. As soon as $v_i$ is found, the
third agents traverses to the intersection while the other two agents wait. From that point on, the graph is explored in all visible directions until each node has been visited.
Like in the previous case, if $e_{mid} = e_{max}$, the paths traversed by both strategies match, and we achieve a competitive ratio of $2$.

\begin{align*}
    \small
    \opt &= 2\max(d_t, d_i + d_\ell)\\
    \onl &= d_t + (d_i + d_s) + (d_i + d_\ell) + \max(d_t, d_i + d_\ell)\\
    &\leq 4\max(d_t, d_i + d_\ell) = 2\opt\\
\end{align*}

If $e_{mid} \neq e_{max}$, the relationships between the paths in the cycle from the previous cases still hold, keeping the competitive ratio at $2$.

\begin{align*}
    \small
    \opt &= 2\max(d_t, d_i + \opt_\ell)\\
    \onl &= d_t + (d_i + d_s) + (d_i + d_\ell) + \max(d_t, d_i + d_s + l(e_{max}))\\
    \text{Case 1: } d_i + d_s + l(e_{max}) \leq d_i + \opt_\ell \leq d_t&:\\
    \opt &= 2d_t\\
    \onl &= 2d_t + 2d_i + d_\ell + d_s \leq 2d_t + 2\opt_\ell + 2d_i\\
    &\leq 4d_t = 2\opt\\
    \text{Case 2: } d_i + d_s + l(e_{max}) \leq d_t \leq d_i + \opt_\ell&:\\
    \opt &= 2\opt_\ell + 2d_i\\
    \onl &= 2d_t + 2d_i + d_\ell + d_s \leq 2\opt_\ell + 4d_i + d_\ell + d_s\\
    &\leq 4\opt_\ell + 4d_i = 2\opt\\
    \text{Case 3: } d_t \leq d_i + d_s + l(e_{max}) \leq d_i + \opt_\ell&:\\
    \opt &= 2\opt_\ell + 2d_i\\
    \onl &= d_t + 3d_i + 2d_s + d_\ell + l(e_{max})\\
    &\leq 4d_i + 2d_s + l(e_{max}) + d_\ell + \opt_\ell\\
    &= 4d_i + d_s + d_\ell + 2\opt_\ell\\
    &\leq 4d_i + 4\opt_\ell = 2\opt
\end{align*}

\begin{figure}[H]
    \begin{center}
        \begin{tikzpicture}[node distance={15mm}, thick, main/.style = {draw, circle}, scale=0.5] 
            \node[main, fill=gray!50, minimum size=0.5cm] (1) [] {$_{s}$ }; 
            \node[main, minimum size=0.5cm] (2) [above right of = 1] {$_{v_i}$ }; 
            \node[main, minimum size=0.5cm] (3) [below right of = 1] {$_{v_s}$ }; 
            \node[main, minimum size=0.5cm] (4) [below right of = 2] {$_{v_\ell}$ }; 
            \node[main, minimum size=0.5cm] (5) [above of = 2] {$_{v_t}$ }; 
            \draw (1) [bend left] to node[above] {} (2);
            \draw (1) [bend right] to node[above] {} (3);
            \draw (2) [bend left] to node[above] {} (4);
            \draw (3) [bend right, dashed] to node[right] {$e_{mid} = e_{max}$} (4);
            \draw (2) [] to node[above] {} (5);
            \path [red, thick, dashed, ->] (1) edge[looseness=1.25, out=45,in=135] node[below] {{$d_\ell$}} (4);
            \path [blue, thick, dashed, ->] (1) edge[out=270,in=180] node[left] {{$d_s$}} (3);
            \path [olive, thick, dashed, ->] (2) edge[out=100,in=260] node[left] {{$d_t$}} (5);
            \path [violet, thick, dashed, ->] (1) edge[out=90,in=180] node[left] {{$d_i$}} (2);
        \end{tikzpicture} 
        \hspace*{2em}
        \begin{tikzpicture}[node distance={15mm}, thick, main/.style = {draw, circle}, scale=0.5] 
            \node[main, fill=gray!50, minimum size=0.5cm] (1) [] {$_{s}$ }; 
            \node[main, minimum size=0.5cm] (2) [above right of = 1] {$_{v_i}$ }; 
            \node[main, minimum size=0.5cm] (3) [below right of = 1] {$_{v_s}$ }; 
            \node[main, minimum size=0.5cm] (4) [below right of = 2] {$_{v_\ell}$ }; 
            \node[main, minimum size=0.5cm] (5) [above of = 2] {$_{v_t}$ }; 
            \draw (1) [bend left] to node[above] {} (2);
            \draw (1) [bend right] to node[above] {} (3);
            \draw (2) [bend left] to node[right] {$e_{mid}$} (4);
            \draw (3) [bend right, dashed] to node[right] {$e_{max}$} (4);
            \draw (2) [] to node[above] {} (5); 
            \path [red, thick, dashed, ->] (1) edge[looseness=1.25, out=45,in=135] node[below] {{$d_\ell$}} (4);
            \path [blue, thick, dashed, ->] (1) edge[out=270,in=180] node[left] {{$d_s$}} (3);
            \path [olive, thick, dashed, ->] (2) edge[out=100,in=260] node[left] {{$d_t$}} (5);
            \path [violet, thick, dashed, ->] (1) edge[out=90,in=180] node[left] {{$d_i$}} (2);
        \end{tikzpicture} 
    \end{center}
    \caption[ALE at intersection]{The two possible cases when the agents start on the cycle of the tadpole graph.\linebreak Left: $e_{mid}=e_{max}$, Right: $e_{mid}\neq e_{max}$.}
    \label{Case4ALE}
\end{figure}
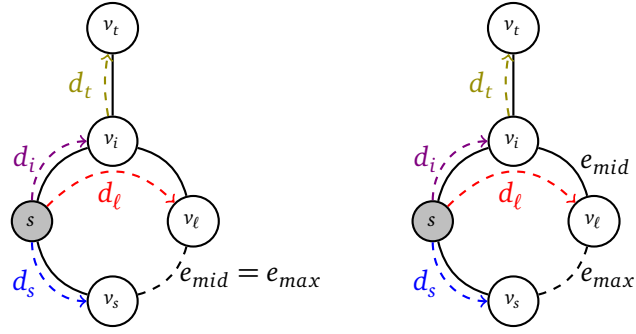

\paragraph*{Starting on a node from the cycle}
In this case, which is shown in Figure \ref{Case4ALE}, we use four agents to reach a competitive ratio of $2$, while using three agents leads to a competitive ratio of $3$. We first show the competitive ratio of $2$
for the four agent case. After that we show why a team of three agents only achieves a competitive ratio of $3$ in this case. With a team of four agents, two agents are sent in each direction, taking
all steps in parallel, until $v_i$ is found by one of the two agent teams. At this point, the team at $v_i$ splits and the exploration continues. Note that this case cannot be distinguished from the previous case
by an online algorithm because the agents only see a node with a degree of $2$ when starting the exploration, but since using this strategy saves exactly $d_i$, it is still $2$-competitive when
starting on a node of the tail. Like in the previous cases, if $e_{mid} = e_{max}$, the paths traversed by both strategies match, and we achieve a competitive ratio of $2$.
 
\begin{align*}
    \small
    \opt &= 2\max(d_t + d_i, d_\ell)\\
    \onl &= d_t + d_i + d_s + d_\ell + \max(d_t + d_i, d_\ell)\\
    &\leq 4\max(d_t + d_i, d_\ell) = 2\opt\\
\end{align*}

If $e_{mid} \neq e_{max}$, the relationships between the paths in the cycle from the previous cases still hold, keeping the competitive ratio at $2$.

\begin{align*}
    \small
    \opt &= 2\max(d_t + d_i, \opt_\ell)\\
    \onl &= d_t + d_i + d_s + d_\ell + \max(d_t + d_i, d_s + l(e_{max}))\\
    \text{Case 1: } d_s + l(e_{max}) \leq \opt_\ell \leq d_i + d_t&:\\
    \opt &= 2d_i + 2d_t\\
    \onl &= d_s + d_\ell + d_i + 2d_t \leq 2\opt_\ell + d_i + 2d_t\\
    &\leq 3d_i + 4d_t \leq 4d_i + 4d_t = 2\opt\\
    \text{Case 2: } d_s + l(e_{max}) \leq d_i + d_t \leq \opt_\ell&:\\
    \opt &= 2\opt_\ell\\
    \onl &= d_s + d_\ell + 2d_t + d_i \leq 2\opt_\ell + 2d_t + d_i\\
    &\leq 4\opt_\ell = 2\opt\\
    \text{Case 3: } d_i + d_t \leq d_s + l(e_{max}) \leq \opt_\ell&:\\
    \opt &= 2\opt_\ell\\
    \onl &= 2d_s + l(e_{max}) + d_\ell + d_t \leq (d_s + d_\ell) + d_t + \opt_\ell\\
    &\leq (L - l(e_{max})) + 2\opt_\ell \leq 4\opt_\ell = 2\opt
\end{align*}

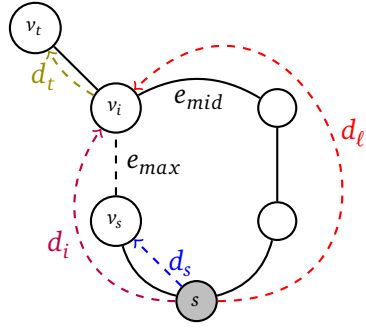
\begin{figure}[H]
    \begin{center}
        \begin{tikzpicture}[node distance={15mm}, thick, main/.style = {draw, circle}, scale=0.5] 
            \node[main, fill=gray!50, minimum size=0.5cm] (1) {$_{s}$ }; 
            \node[main, minimum size=0.5cm] (2) [above right of = 1] {}; 
            \node[main, minimum size=0.5cm] (3) [above left of = 1] {$_{v_s}$ }; 
            \node[main, minimum size=0.5cm] (4) [above of = 2] {}; 
            \node[main, minimum size=0.5cm] (5) [above of = 3] {$_{v_i}$ }; 
            \node[main, minimum size=0.5cm] (6) [above left of = 5] {$_{v_t}$ }; 
            \draw (1) [bend right] to node[above] {} (2);
            \draw (1) [bend left] to node[above] {} (3);
            \draw (2) [] to node[above] {} (4);
            \draw (3) [dashed] to node[right] {$e_{max}$} (5);
            \draw (5) [] to node[above] {} (6);
            \draw (4) [bend right] to node[below] {$e_{mid}$} (5);
            \path [red, thick, dashed, ->] (1) edge[looseness=2.7, out=0,in=45] node[right] {{$d_\ell$}} (5);
            \path [blue, thick, dashed, ->] (1) edge[out=135,in=315] node[right] {{$d_s$}} (3);
            \path [olive, thick, dashed, ->] (5) edge[out=150,in=300] node[left] {{$d_t$}} (6);
            \path [purple, thick, dashed, ->] (1) edge[looseness=1.3, out=180,in=235] node[left] {{$d_i$}} (5);
        \end{tikzpicture} 
    \end{center}
    \caption[ALE at intersection]{The special case, where $v_i$ is found on the longer path $L-d_i$ instead of the shortest path~$d_i$.}
    \label{Case5ALE}
\end{figure}

With only three agents available, the exploration strategy where the third agent waits at the start until the intersection is found only achieves a competitive ratio of $3$. Starting on a node of the cycle, there are two directions in which $v_i$ can be found. When $e_{mid} \neq e_{max}$, and $v_i$ is located on $p_\ell$,
with $L/2 < L - d_i \leq d_\ell$, the intersection can be found before the exploration of the cycle is completed, as depicted in Figure \ref{Case5ALE}. In this case the agent waiting at $s$ traverses
a distance of $L - d_i$ instead of the optimal distance of $d_i$. Since $v_i$ is located on $p_\ell$, $p_i$ must include $e_{max}$. Because of this
the exploration takes at most time $2\opt + (L - d_i) \leq 2\opt + (L - l(e_{max})) \leq 2\opt + 2\opt_\ell \leq 3\opt$, leading to a competitive upper bound of $3$ in the three agent case.
\end{proof}

\subsection{Extending the Strategies to $n$-Tadpole Graphs} \label{ntadpoles}
In this section we generalize the strategies found for tadpole graphs in \S\ref{tadpoles} onto $n$-tadpole graphs. In \S\ref{lower_bound_ntad} we show that the lower bound of $1.5$ also holds for the exploration
of $n$-tadpole graphs with any number of agents $k > 2$ in the time model. In \S\ref{ntad_two} we use the upper bound of $3$ shown by Fritsch \cite{Fritsch2021} for single-agent exploration of unicyclic graphs (graphs that contain exactly one cycle)
to construct a trivial upper bound of $12$ for two agent exploration of $n$-tadpole graphs with arbitrary $n$. This is possible, since every $n$-tadpole graph is also a unicyclic graph. In \S\ref{ntad_three} we extend the strategy using three agents on tadpole graphs to an $n+2$ agent strategy for $n$-tadpoles, reaching a competitive ratio of 
$1$ for the energy and $1.5+\frac{n}{2}$ for the time model. In \S\ref{ntad_four} the strategy using four agents on tadpole graphs is extended to a strategy using $2^{n+1}$ agents on $n$-tadpole graphs to achieve
a competitive ratio of $1.5$ for any $n$. Since any tadpole graph is also an $n$-tadpole graph, the lower bound of $1.5$ for a fixed number of two agents in the energy model
shown in Thm.\ref{TadpoleEnergyLower} still holds for $n$-tadpole graphs. The results of this section are listed in Table \ref{table:ntad}. We define $n$-tadpole graphs as follows:

\paragraph*{$n$-Tadpole Graph} An n-tadpole graph $T_n=(V,E)$ is a graph, which consists of a cycle to which
$n$ tails $T_1,...,T_n$ are attached. There can be multiple tails attached to the same cycle node. The length of the cycle will be denoted as $L_c$, the length of the tails
will be denoted as $L(T_i)$. Note that by this definition a cycle is a $0$-tadpole graph, and a tadpole graph is 
a $1$-tadpole graph. 

\begin{table}[H]
    \small
    \centering
    \begin{tabular}{|l|ll|ll|}
    \hline
               & \multicolumn{2}{c|}{Time Model}                                     & \multicolumn{2}{c|}{Energy Model}                                   \\ \hline
               & \multicolumn{1}{c|}{Lower Bound} & \multicolumn{1}{c|}{Upper Bound} & \multicolumn{1}{c|}{Lower Bound} & \multicolumn{1}{c|}{Upper Bound} \\ \hline
    $1$ Agent    & \multicolumn{1}{l|}{2 \cite{Brandt2020}}           & 3 \cite{Fritsch2021}                               & \multicolumn{1}{l|}{2 \cite{Brandt2020}}           & 3 \cite{Fritsch2021}                               \\ \hline
    $2$ Agents  & \multicolumn{1}{l|}{1.5 (Thm. \ref{nTadpoleLower})}         & 12 (Thm. \ref{nTadpoleUpperTwo})                               & \multicolumn{1}{l|}{1.5 (Thm. \ref{TadpoleEnergyLower})}            & 12 (Thm. \ref{nTadpoleUpperTwo})                               \\ \hline
    $n+2$ Agents & \multicolumn{1}{l|}{1.5 (Thm. \ref{nTadpoleLower})}         & $1.5+0.5n$ (Thm. \ref{nTadpoleUpperNplusTwo})                         & \multicolumn{1}{l|}{1}           & 1 (Thm. \ref{nTadpoleUpperNplusTwo})                                \\ \hline
    $2^{n+1}$ Agents   & \multicolumn{1}{l|}{1.5 (Thm. \ref{nTadpoleLower})}         & 1.5 (Thm. \ref{nTadpoleUpperExp})                            & \multicolumn{1}{l|}{1}           & 1 (Thm. \ref{nTadpoleUpperNplusTwo})                               \\ \hline
    \end{tabular}
    \caption{Results for multi-agent exploration of n-Tadpole Graphs.} \label{table:ntad}
\end{table}

\subsubsection{A lower bound for n-Tadpole Graphs}\label{lower_bound_ntad}

We can reuse the lower bound construction from \S\ref{lower_bound_tad} to show a lower bound of $1.5$ for the competitive ratio of any exploration of $n$-tadpole graphs,
with $k \geq 2$ Agents. We only need to show this for cases where $n\geq2$, since the lower bound of $1.5$ has already been shown for cycles by Higashikawa et al.\cite{Higashikawa2012}
and for tadpole graphs in \S\ref{lower_bound_tad}. For this, instead of one tail of length $\varepsilon$, we attach $n$ tails of length $\delta = \frac{\varepsilon}{n}$ to the starting node $s$.
Because of this the tails can still be explored in parallel to the exploration of the cycle, so the analysis from \S\ref{lower_bound_tad} still holds, since using more than
two agents for the exploration of a cycle cannot lower its exploration time.

\begin{theorem}[A lower bound for $n$-tadpole graphs]\label{nTadpoleLower}
    For any number of agents $k \geq 2$, there exists no online graph exploration algorithm with a lower competitive ratio than $1.5$ for the time model.
\end{theorem}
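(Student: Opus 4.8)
The plan is to reduce to the case $n\ge 2$ and then re-run the adversary of Theorem~\ref{TadpoleTimeLower} on a lightly modified graph. For $n=0$ an $n$-tadpole graph is a cycle, where the $1.5$ lower bound for $k\ge 2$ agents is due to Higashikawa et al.~\cite{Higashikawa2012}, and for $n=1$ it is a tadpole graph, which is exactly Theorem~\ref{TadpoleTimeLower}; so only $n\ge 2$ needs a new argument. For that, I would take the two hard instance families from the proof of Theorem~\ref{TadpoleTimeLower} --- the cycle on $s,c_2,c_3$ with the length-$1$ edge $(s,c_3)$, the $\varepsilon$-edge path $(s,c_2)$, and the adversary's two choices for $l(c_2,c_3)$, with $\varepsilon=\tfrac{1}{J}$, $J\in\mathbb{N}$ --- and replace the single length-$\varepsilon$ tail at $s$ by $n$ tails attached to $s$, each of length $\delta:=\varepsilon/n$. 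The point of this choice is that the $n$ tails together still have total length $\varepsilon$, exactly as the old single tail.

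The next step is to redo the offline bookkeeping. In Case~1 the optimum stays $2$: one agent performs the round trip $s\to c_3\to s$ (cost $2$), a second agent performs the round trip $s\to c_2\to s$ along the $\varepsilon$-path (cost $2(1-3\varepsilon)$) together with the $n$ tail round trips (total cost $2n\delta=2\varepsilon$), for $2-4\varepsilon<2$, and any further agents idle; this is best possible because visiting $c_3$ and returning to $s$ already costs at least $2$ for some agent, irrespective of $k$. The identical accounting gives optimum $2(j+1)\varepsilon$ in Case~2. Hence neither splitting the tail length into $n$ pieces nor adding agents beyond two lowers $\opt$ --- which is precisely the informal remark preceding the statement that the tails are explorable in parallel and that extra agents cannot speed up a cycle.

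Finally, the online side is essentially the argument of Theorem~\ref{TadpoleTimeLower}: the adversary's decision rule --- whether some agent enters the length-$1$ edge $(s,c_3)$ before or after a distance $(J-3)\varepsilon$ has been traversed along the $\varepsilon$-path toward $c_2$ --- refers only to the cycle part of the graph, the $n$ short dead ends reveal nothing that helps an algorithm escape this dilemma, and traversing them only adds to the online cost. So an online algorithm needs time at least $3-3\varepsilon$ in Case~1 and at least $3(j+1)\varepsilon$ in Case~2, against offline costs $2$ and $2(j+1)\varepsilon$, giving $\onl/\opt\ge 1.5$ as $\varepsilon\to 0$ for every $k\ge 2$. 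The one place that needs care --- and the real content --- is making the ``tails are irrelevant'' claim airtight on both sides at once: offline, that no redistribution of the $n$ tails among $k>2$ agents beats the $c_3$-round-trip bottleneck; online, that discovering and identifying the short dead ends cannot be leveraged to dodge the Case~1/Case~2 trap. Both follow routinely from the corresponding facts in the proof of Theorem~\ref{TadpoleTimeLower}.
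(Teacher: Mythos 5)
Your proposal matches the paper's own argument: the paper likewise reduces to $n\geq 2$ (citing Higashikawa et al.\ for cycles and Theorem~\ref{TadpoleTimeLower} for tadpole graphs), attaches $n$ tails of length $\delta=\varepsilon/n$ (total length $\varepsilon$) to $s$ in the same two-case adversary construction, and observes that the offline optimum and the online dilemma are unchanged since the tails can be explored in parallel and extra agents cannot speed up the cycle. Your version only spells out the offline bookkeeping slightly more explicitly; the approach is the same and correct.
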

\vspace*{-1em}

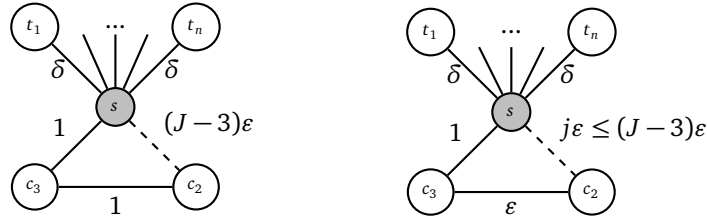
\begin{figure}[H]
    \small
    \begin{center}
    \begin{tikzpicture}[node distance={15mm}, thick, main/.style = {draw, circle}, scale=0.5] 
        \node[main, fill=gray!50, minimum size=0.5cm] (3) [] {$_{s}$ }; 
        \node[main, minimum size=0.5cm] (2) [below left of = 3] {$_{c_3}$ }; 
        \node[main, minimum size=0.5cm] (4) [below right of = 3] {$_{c_2}$ };
        \node[main, minimum size=0.5cm] (6) [above left of = 3] {$_{t_1}$ };
        \node[main, minimum size=0.5cm] (7) [above right of = 3] {$_{t_n}$ };
        \node (8) at ($(6)!0.5!(7)$) {...};
        \draw (2) [] to node[below] {$1$} (4);
        \draw (2) [] to node[above left] {$1$} (3);
        \draw (3) [dashed] to node[above right] {$(J-3)\varepsilon$} (4);
        \draw (3) [] to node[left] {$\delta$} (6);
        \draw (3) [] to node[right] {$\delta$} (7);
        \draw (3) [] -- ($(6)!0.3!(7) - (0,0.2)$);
        \draw (3) [] -- ($(6)!0.5!(7) - (0,0.2)$);
        \draw (3) [] -- ($(6)!0.7!(7) - (0,0.2)$);
    \end{tikzpicture} 
    \hspace*{5em}
    \begin{tikzpicture}[node distance={15mm}, thick, main/.style = {draw, circle}] 
        \node[main, fill=gray!50, minimum size=0.5cm] (3) [] {$_{s}$ }; 
        \node[main, minimum size=0.5cm] (2) [below left of = 3] {$_{c_3}$ }; 
        \node[main, minimum size=0.5cm] (4) [below right of = 3] {$_{c_2}$ };
        \node[main, minimum size=0.5cm] (6) [above left of = 3] {$_{t_1}$ };
        \node[main, minimum size=0.5cm] (7) [above right of = 3] {$_{t_n}$ };
        \node (8) at ($(6)!0.5!(7)$) {...};
        \draw (2) [] to node[below] {$\varepsilon$} (4);
        \draw (2) [] to node[above left] {$1$} (3);
        \draw (3) [dashed] to node[above right] {$j\varepsilon \leq (J-3)\varepsilon$} (4);
        \draw (3) [] to node[left] {$\delta$} (6);
        \draw (3) [] to node[right] {$\delta$} (7);
        \draw (3) [] -- ($(6)!0.3!(7) - (0,0.2)$);
        \draw (3) [] -- ($(6)!0.5!(7) - (0,0.2)$);
        \draw (3) [] -- ($(6)!0.7!(7) - (0,0.2)$);
    \end{tikzpicture} 
    \end{center}
    \small\caption[Tadpole lower bound]{Any number of agents $k\geq2$ cannot explore the given graph with a competitive ratio lower than $1.5$. The dashed line is a path consisting entirely of edges with length $\varepsilon$.
    Case 1 (Left): the graph an adversary reveals if $(s, c_3)$ is not traversed after an agent traversed a distance of $(J-3)\varepsilon$ on the path between $s$ and $c_2$.
    Case 2 (Right): the graph an adversary creates when $(s,c_3)$ is traversed before an agent traverses a distance of $j\varepsilon\leq(J-3)\varepsilon$ on the path between $s$ and $c_2$.}
    \label{ntadpole-time}
\end{figure}

\begin{proof}
    The analysis stays the same as in \S\ref{lower_bound_tad}, except that instead of a single tail of length $\varepsilon$, $n$ tails of combined length $\varepsilon$ are explored.
\end{proof}

\subsubsection{n-Tadpole graphs with two Agents}\label{ntad_two}
We can show that a constant competitive ratio regardless of $n$ can be achieved with only two agents on any $n$-tadpole graph. 
For this we make use of a competitive upper bound of $3$ for single-agent exploration of unicyclic graphs shown by Fritsch \cite{Fritsch2021},
to show that using only a single agent to explore an $n$-tadpole graph yields a competitive upper bound of $12$ compared to an optimal exploration using two agents.

\begin{theorem}[Upper bound for two-agent n-tadpole graph exploration]\label{nTadpoleUpperTwo}
    Using the single-agent strategy for the exploration of unicyclic graphs, leads to a competitive upper bound of $12$ compared to
    an optimal offline strategy using two agents. This applies to the time- as well as the energy model.
\end{theorem}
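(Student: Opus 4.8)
The plan is to ignore one of the two agents entirely (it stays at $s$ for the whole exploration) and to run, with the other agent, the single-agent online exploration strategy for unicyclic graphs that Fritsch~\cite{Fritsch2021} showed to be $3$-competitive. Every $n$-tadpole graph contains exactly one cycle and is therefore unicyclic, so this strategy is admissible on the whole graph class and Fritsch's guarantee applies verbatim: the active agent travels a total distance $D\le 3\,\opt_1$, where $\opt_1$ denotes the cost of an optimal single-agent offline exploration of the same graph (for a single agent the elapsed time equals the distance travelled, so there is no ambiguity between the two cost models at this point). Since the second agent never moves, $D$ is simultaneously the cost of this online strategy in the time model and in the energy model. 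It therefore remains to compare $D$ — equivalently $\opt_1$ — with the optimal \emph{two-agent} offline cost $\opt_2$.

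For this step I would use that a single agent can mimic an optimal two-agent offline solution sequentially. Let $W_1$ and $W_2$ be the two closed walks based at $s$ traced by the two optimal agents; together they visit every vertex, and $\max(|W_1|,|W_2|)=\opt_2$. Concatenating $W_1$ and then $W_2$ is a valid single-agent exploration of length $|W_1|+|W_2|\le 2\max(|W_1|,|W_2|)=2\,\opt_2$, whence $\opt_1\le 2\,\opt_2$. If one prefers to stay closer to the graph structure, the same estimate up to a further factor of $2$ also follows by bounding $\opt_1$ above by twice the weight of a spanning tree of the $n$-tadpole (i.e.\ twice the total edge weight minus the heaviest cycle edge) and bounding $\opt_2$ below by roughly half of that quantity, using that in the union of the two optimal agents' walks every tail edge must be traversed at least twice; this variant gives $\opt_1\le 4\,\opt_2$. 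Either way $\opt_1\le 4\,\opt_2$, so $D\le 3\,\opt_1\le 12\,\opt_2$ in both the time and the energy model, which is the claimed bound; the constant is deliberately non-tight (the concatenation argument above in fact delivers $6$, and the factor $12$ is stated only as a convenient ``trivial'' constant).

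No step here is technically hard; the points that require care are (i) that the concatenation $W_1W_2$ is legitimate precisely because our model forces every agent to return to $s$, so $W_1$ and $W_2$ are genuine closed walks at $s$, and (ii) that the comparison is carried out against the correct cost in each model — which is immediate, because the online strategy's time- and energy-costs coincide (one agent is idle) and, for the offline optimum, the makespan equals the largest single-agent distance, as already noted in \S\ref{competitive}. I would also emphasise the qualitative point that the resulting bound is independent of $n$: the single-agent unicyclic strategy is completely oblivious to the number of tails, and two agents can beat one by at most a constant factor in the offline setting, so no dependence on $n$ can enter.
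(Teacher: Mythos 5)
Your proposal is correct, and its first half (idle second agent, run Fritsch's $3$-competitive unicyclic strategy, online cost $=D\le 3\,\opt_1$ in both models) coincides with the paper's. Where you genuinely diverge is the comparison of $\opt_1$ with the two-agent optimum $\opt_2$: the paper works structurally inside the $n$-tadpole, writing down the optimal single-agent tour length explicitly ($2(L_c-l(e_{max}))+2\sum_i L(T_i)$ if $l(e_{max})>L_c/2$, else $L_c+2\sum_i L(T_i)$) and lower-bounding $\opt_2$ by half the weight of the edges any exploration must cover, which yields $\opt_1\le 4\,\opt_2$ and hence the stated $12$; your second, ``spanning-tree'' variant is essentially this argument. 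Your primary argument is instead a generic serialization: concatenating the two closed walks $W_1,W_2$ of an optimal offline two-agent solution gives a feasible single-agent tour, so $\opt_1\le |W_1|+|W_2|\le 2\max(|W_1|,|W_2|)=2\,\opt_2$ (legitimate precisely because the model forces both agents back to $s$, and because offline the makespan equals the largest walk length, as the paper notes in \S\ref{competitive}). This buys you a cleaner and stronger bound, $D\le 3\,\opt_1\le 6\,\opt_2$, it is independent of the tadpole structure, and it generalizes immediately to $\opt_1\le k\,\opt_k$ for any $k$; the paper's calculation, by contrast, is tailored to $n$-tadpoles and only delivers the looser constant $12$. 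So your proof establishes the theorem (indeed a sharper statement) by a different and arguably preferable route.
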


\begin{proof}
    Let $e_{max}$ denote the length of the longest cycle edge in a given $n$-tadpole graph:
    \begin{enumerate}
        \item If $l(e_{max}) > L_c/2$ an optimal single-agent offline strategy traverses a distance of $\opt_1 = 2(L_c-l(e_{max}))+2\sum_{i=1}^{n}L(T_i)$
        \item If $l(e_{max}) \leq L_c/2$ an optimal single-agent offline strategy traverses a distance of $\opt_1 = L_c+2\sum_{i=1}^{n}L(T_i)$
    \end{enumerate}

    To explore an $n$-tadpole graph, the exploration path must at least include all but one edges of the cycle, as well as all edges of all tails. So any optimal exploration
    has a length of at least $(L_c - l(e_{max}))+\sum_{i=1}^{n}L(t_i)$. If we assume that this can be evenly split between two agents, we get the shortest exploration time of 
    $\opt_2 = 0.5(L_c - l(e_{max}))+0.5\sum_{i=1}^{n}L(t_i)$ for two agent exploration of an $n$-tadpole graph. We now show that this is at most $4$ times smaller than $opt_1$.

    \paragraph*{Case 1} $2(L_c-l(e_{max}))+2\sum_{i=1}^{n}L(T_i) = 4\opt_2$.
    \paragraph*{Case 2} $L_c/2 \leq L_c - l(e_{max})$, since $l(e_{max}) \leq L_c/2$.\\
    \hspace*{1em} Thus, $L_c+2\sum_{i=1}^{n}L(T_i) \leq 2(L_c - l(e_{max})) + 2\sum_{i=1}^{n}L(T_i) = 4\opt_2$.\\

    Since the single-agent online exploration strategy for unicyclic graphs takes at most time $3\opt_1$, it takes at most time $12\opt_2$.
\end{proof}

\subsubsection{n-Tadpole graphs with $n+2$ agents}\label{ntad_three}

The strategy from \S\ref{tadpole_three}, which used three agents to explore a tadpole graph, can be generalized to a strategy using $n+2$ agents to explore any $n$-tadpole graph.
The general idea is, that instead of taking at most time $4\max(d(s,v), v\in V)$, resulting in a competitive ratio of $2$ for the time model, we take at most time $(n+3)\max(d(s,v), v\in V)$,
resulting in a competitive ratio of $1.5+\frac{n}{2}$ for the time model, while maintaining a competitive ratio of $1$ in the energy model, since the longest distance traversed by any agent is
still $2\max(d(s,v), v\in V)$, matching the offline strategy.

\begin{theorem}[Exploring $n$-tadpole graphs with $n+2$ agents]\label{nTadpoleUpperNplusTwo}
    A strategy for exploring an $n$-tadpole graph with $n+2$ agents can reach a competitive ratio of $1$ for the energy model, and
    a competitive ratio of $1.5+\frac{n}{2}$ for the time model.
\end{theorem}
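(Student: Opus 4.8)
The plan is to mimic the three-agent argument of Theorem~\ref{TadpoleUpperThree}, keeping two agents on the cycle and dispatching one of $n$ reserve agents down each tail as soon as that tail's attachment point on the cycle is revealed. Write $M := \max_{v \in V} d(s,v)$. I would first record the offline lower bound $\opt(T_n,s,n+2) \ge 2M$, valid in both models: any agent that visits the $s$-farthest vertex $v^\ast$ must afterwards return to $s$, hence traverses a closed walk of length at least $2\,d(s,v^\ast) = 2M$, which forces the exploration time (time model) and the largest per-agent distance (energy model) to be at least $2M$. Since a competitive ratio is always at least $1$, it then suffices to prove (i) in the energy model that no agent ever traverses more than $2M$, and (ii) in the time model that $\onl \le (n+3)M$, because $(n+3)M/(2M) = 1.5 + \tfrac{n}{2}$.

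For the strategy I would split, exactly as in the proof of Theorem~\ref{TadpoleUpperThree}, into the possible starting positions of $s$ (end of a tail, interior node of a tail, degree-two cycle node, attachment node of one or more tails). In every case two agents explore the cycle using the \textit{AMP} distance rule --- move the active agent minimising its already traversed distance plus the next edge length --- and whenever a cycle agent reveals a not-yet-served attachment node of degree $\ge 3$, one reserve agent is sent from $s$ along the now-known shortest path to that attachment node and then down the newly revealed tail, all other agents pausing while it catches up; at all times at most one agent is in motion until every vertex has been visited, after which all $n+2$ agents backtrack to $s$ in parallel. The structural claims I must verify are: (a) the subgraph actually traversed is the spanning tree $T' := T_n \setminus e_{mid}$ obtained by deleting the midpoint edge, and the \textit{AMP} rule makes the two cycle agents halt precisely at $v_s$ and $v_\ell$ without ever entering $e_{mid}$ --- this is the argument already carried out in the proof of Theorem~\ref{ampCycles}, used verbatim; (b) $T'$ has at most $n+2$ leaves, namely the $n$ tail ends together with $v_s$ and $v_\ell$, so that one agent can be permanently assigned to each leaf; and (c) for every such leaf $\lambda$ the dedicated agent walks a shortest $s$--$\lambda$ path of $T'$ monotonically outward, whose length equals $d(s,\lambda) \le M$ (one checks $d_{T'}(s,\lambda) = d(s,\lambda)$ from $d_s, d_\ell \le L_c/2$ and from the fact that a shortest path from $s$ to a tail end never crosses the midpoint of the cycle).

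Given (a)--(c), the two bounds follow by bookkeeping. During the visiting phase only one agent moves at a time, so this phase lasts exactly the total distance traversed, which is $d_s + d_\ell + \sum_{i=1}^{n} d(s,v_{t_i})$; this is a sum of at most $n+2$ terms, each at most $M$, hence at most $(n+2)M$, and the final parallel backtrack costs $\max_\lambda d(s,\lambda) = M$, giving $\onl \le (n+3)M$ and the claimed ratio $1.5 + \tfrac{n}{2}$ for the time model. In the energy model the agent assigned to leaf $\lambda$ walks out to $\lambda$ and back and does nothing else, traversing exactly $2\,d(s,\lambda) \le 2M \le \opt$, so the ratio is $1$. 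I expect the main obstacle to be the bookkeeping of the case analysis over starting positions --- in particular showing that the dispatch-and-pause scheme really keeps each of the $n+2$ agents confined to a single outward-then-backward excursion along a shortest path of length $\le M$, i.e.\ that the explored region is exactly $T'$ with its $\le n+2$ leaves; once that is established, the arithmetic in both models is routine.
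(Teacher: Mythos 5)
Your proposal is correct and follows essentially the same route as the paper: the paper's proof likewise has only one agent move at a time following the AMP-based dispatch-and-wait scheme from the three-agent strategy, observes that the traversed paths coincide with the optimal offline ones (giving energy ratio $1$), and bounds the time by the sequential visiting phase plus a parallel backtrack, yielding $(n+3)\max_{v}d(s,v)$ against $\opt \geq 2\max_{v}d(s,v)$. Your explicit verification of the leaf count and of the shortest-path property of the dispatched agents simply spells out what the paper leaves implicit by reference to \S\ref{tadpole_three}.
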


\begin{proof}
    When starting the exploration, each outgoing edge of $s$ is assigned to an agent. The agents follow these edges one agent at a time, following the strategy from \S\ref{tadpole_three}.
    Whenever an intersection is found, one of the outgoing edges is assigned to the agent which found the intersection, while the other edges
    are assigned to agents without assigned edges. These agents will then traverse to the intersection while all other agents wait. After the agents reach
    the intersection, the exploration continues until all nodes have been visited by some agent and all agents return to $s$ by traversing the shortest paths.\\
    Since only one agent moves at any time, the paths traversed by the online strategy match the paths traversed in an optimal offline solution, for the same reasons
    as in \S\ref{tadpole_three}. Because of this, this strategy reaches a competitive ratio of $1$ for the energy model. While the optimal offline solution moves all agents in parallel, leading to an exploration time of $2\max(d(s,v), v\in V)$, the online strategy only
    moves the agents in parallel while backtracking, leading to a total exploration time of at most $(n+3)\max(d(s,v), v\in V)$ and a competitive ratio of $1.5+\frac{n}{2}$ 
    for the time model.
\end{proof}
\subsubsection{n-Tadpole graphs with $2^{n+1}$ agents}\label{ntad_four}

The $1.5$ competitive strategy for the exploration of tadpole graphs with four agents can be extended to a strategy using $2^{n+1}$ agents for the exploration
of $n$-tadpole graphs, which allows halving the teams of agents $n$ times.

    \begin{theorem}[Exploring $n$-tadpole graphs with $2^{n+1}$ agents]\label{nTadpoleUpperExp}
        A strategy for exploring an $n$-tadpole graph with $2^{n+1}$ agents can reach a competitive ratio of $1.5$ for the time model.
    \end{theorem}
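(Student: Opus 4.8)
The plan is to follow the template of the proof of Theorem~\ref{TadpoleUpperFour}, upgrading the ``a pair of agents behaving as one agent'' device to ``a team of agents behaving as one agent'' that is allowed to repeatedly halve itself. Recall that any optimal offline exploration of an $n$-tadpole graph needs time $2\max(d(s,v),\,v\in V)$, and that our online strategy, once every node has been visited, needs at most a further $\max(d(s,v),\,v\in V)$ to let all agents backtrack to $s$ along shortest paths. Hence it suffices to show that the online strategy visits every node within time $2\max(d(s,v),\,v\in V)$; the total exploration time is then at most $3\max(d(s,v),\,v\in V)=1.5\cdot\opt$.

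First I would specify the strategy. Starting from $s$, the $2^{n+1}$ agents are split among the edges leaving $s$ (two teams, one in each cycle direction, if $\deg(s)=2$; one team per outgoing edge if $s$ is an intersection or lies on a tail). Teams move exactly as in Theorem~\ref{TadpoleUpperFour}: each team acts as a single agent; at every moment all teams move except for one team of currently maximal traversed distance (all tied maximal teams move); and whenever a team arrives at a node it compares next-edge distances with the unique waiting team so that at most one team ever waits. Whenever a team arrives at a newly revealed intersection carrying $m$ tails, it \emph{halves itself $\lceil\log_2(m+1)\rceil$ times}, so that the team continuing along the cycle and each of the $m$ new tail-teams receives at least one agent (surplus sub-teams may be merged into these or left idle; this never hurts), and tail-teams never split again since tails are paths. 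The timing analysis is then essentially identical to that of Theorem~\ref{TadpoleUpperFour}: since at most one team waits at a time and a waiting team always has maximal traversed distance, a team trying to reach distance $d$ that has waited a total time $t$ satisfies $t\le d$ — once it has waited for time $d$, every other team has either traversed distance $\ge d$ or already finished — so it reaches its distance-$d$ point by time $2d$. Consequently every node $v$ is visited by time $2d(s,v)\le 2\max(d(s,v),\,v\in V)$, as required.

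The only genuinely new point, and the step I expect to be the main obstacle, is to verify that $2^{n+1}$ agents really do suffice, i.e.\ that no team ever drops below one agent. For this I would bound the number of halvings along the path in the exploration from $s$ to an arbitrary leaf (a tail end, or one of $v_\ell,v_s$ on the cycle). The initial split at $s$ costs $\lceil\log_2(\deg(s))\rceil$ halvings, and each intersection with $m_i$ tails that the path passes through costs $\lceil\log_2(m_i+1)\rceil$ halvings. Using $\lceil\log_2(m+1)\rceil\le m$ for $m\ge 1$ (since $2^m\ge m+1$), together with the fact that the $n$ tails are attached at these intersections with total multiplicity $n$, a short case analysis over the three starting-position types (degree-two node, intersection, node on a tail) shows the total number of halvings on any such path is at most $n+1$; hence every team keeps at least $2^{n+1}/2^{n+1}=1$ agent throughout, the strategy is well defined, and by the preceding paragraph it is $1.5$-competitive in the time model. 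The delicate bookkeeping here is making the ``total tail multiplicity equals $n$'' accounting airtight when $s$ is itself an intersection or lies on a tail (so that the tail carrying $s$ is counted exactly once), which is where off-by-one errors would most likely creep in.
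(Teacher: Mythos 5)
Your proposal matches the paper's proof in all essentials: the same team-based generalization of the four-agent strategy (teams acting as single agents, splitting at each newly found intersection, with only a maximal-distance team waiting), the same waiting-time bound $t\le d$ giving that every node $v$ is visited by time $2d(s,v)$ and hence a total of $3\max_{v}d(s,v)=1.5\cdot\opt$, and the same reason that $2^{n+1}$ agents suffice (the cumulative split factor along any exploration path is at most $2^{n+1}$, since the tails have total multiplicity $n$). Your only deviation is implementing each split as $\lceil\log_2(m+1)\rceil$ halvings rather than the paper's direct division into $j-1$ teams, and your explicit per-path halving count is in fact more careful than the paper, which simply asserts that one never runs out of agents.
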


\begin{proof} The strategy works similar to the one shown in \S\ref{tadpole_four}, but in this case, whenever an intersection with degree $j>2$ is found, the group of $k$ agents located on the intersection
    split into $j-1$ teams of size $\lfloor\frac{k}{j-1}\rfloor$ and continue the exploration in all unexplored outgoing directions. Since the initial team size
    is $2^{n+1}$, we never run out of agents when finding an intersection, while exploring an $n$-tadpole~graph.\\
We already know that any agent $a$ traversing some distance $d_a$, can wait at most $d_a$ time steps, before some other
agent reaches a node where the next traversed edge would increase its traversed distance to a value larger than $d_a$, or 
all other agents have reached their destinations. Because of this, the last visited node $v$ is visited after at most time $2d(s,v)$ has passed, 
leading to an exploration time of at most $3\max(d(s,v), v\in V)$ and a competitive ratio of $1.5$.
\end{proof}

\end{document}